\newtheoremstyle{definition}{}{}{}{}{\bfseries}{.}{.5em}{\thmname{#1}\thmnumber{ #2}\thmnote{ (#3)}}
\theoremstyle{definition}
\newtheorem{definition}{Definition}
\theoremstyle{plain}
\newtheorem{theorem}{Theorem}
\theoremstyle{plain}
\newtheorem{proposition}{Proposition}
\theoremstyle{plain}
\newtheorem{lemma}{Lemma}
\theoremstyle{plain}
\newtheorem{corollary}{Corollary}
\theoremstyle{definition}
\theoremstyle{remark}
\newtheorem{remark}{Remark}
\Crefname{equation}{Equation}{Eqs.}
\title{\LARGE \bf New Insights into Cascaded Geometric Flight Control: \\ From Performance Guarantees to Practical Pitfalls}
\author{Brett T. Lopez$^1$
\thanks{$^{1}$VECTR Laboratory, University of California, Los Angeles, Los Angeles, CA, {\tt\small btlopez@ucla.edu}}}
\begin{document}

\maketitle

\begin{abstract}
   We present a new stability proof for cascaded geometric control used by aerial vehicles tracking time-varying position trajectories. Our approach uses sliding variables and a recently proposed quaternion-based sliding controller to demonstrate that exponentially convergent position trajectory tracking is theoretically possible. Notably, our analysis reveals new aspects of the control strategy, including how tracking error in the attitude loop influences the position loop, how model uncertainties affect the closed-loop system, and the practical pitfalls of the control architecture.
\end{abstract}

\section{Introduction}

Autonomous aerospace systems rely on precise trajectory tracking to safely execute maneuvers that satisfy state and actuator constraints.
However, designing high-performance tracking controllers is nontrivial due to the inherent nonlinearities arising from orientation dynamics and the strong coupling between rotational and translational dynamics.
The seminal work by Frazzoli et al.~\cite{frazzoli2000trajectory} was the first to  employ ``backstepping on manifolds'' design to prove closed-loop asymptotic stability of a cascaded geometric controller (see \cref{fig:geo_cascaded_control}) for VTOL aircraft tracking time-varying position trajectories.
By representing vehicle orientation as a rotation matrix belonging to the special orthogonal group $SO(3)$, the approach in \cite{frazzoli2000trajectory} combined control design on smooth manifolds with nested feedback loops, i.e., backstepping, to get, using modern terminology, an asymptotically convergent closed-loop system \cite{pavlov2004convergent,pavlov2005convergent}.
In other words, the closed-loop system was guaranteed to track any dynamically feasible time-varying position trajectory asymptotically. 
More broadly, \cite{frazzoli2000trajectory} was the first to propose a nested tracking control law designed on the special Euclidean manifold $SE(3) = \mathbb{R}^3 \times SO(3)$ for VTOL aircraft.

The popularity of small-scale VTOL multirotors in the early 2010's renewed interest in nonlinear flight control and trajectory tracking.
Lee et al.~\cite{lee2010geometric,lee2012exponential,goodarzi2013geometric} extended \cite{frazzoli2000trajectory} by combining feedback linearization, geometric control, and backstepping to attain almost global exponential stability on $SE(3)$ with a similar architecture as \cite{frazzoli2000trajectory}, but featuring a more effective attitude tracking controller.
Almost global stability is the best possible result attainable when using a rotation matrix attitude parameterization \cite{bhat2000topological}.
The attitude controller from Lee et al.~was strengthened in \cite{lopez2021sliding} using a quaternion sliding variable to achieve globally exponentially convergent trajectory tracking, but the approach was not examined within the context of cascaded geometric control.

The main contributions of this work are a new stability proof for aerial vehicles tracking time-varying trajectories by combining cascaded geometric control with sliding variables \cite{slotine1991applied}, along with the new insights gained from using sliding variables in the analysis.
Building on the recently proposed quaternion-based sliding variable \cite{lopez2021sliding} that guarantees global exponential convergence to any desired orientation trajectory, we prove that nesting the attitude controller from \cite{lopez2021sliding} with linear position control yields an exponentially convergent closed-loop system \cite{pavlov2004convergent,pavlov2005convergent}, which formally means that the position tracking error converges to zero exponentially.
By employing sliding variables in our analysis, the stability proof is streamlined, and, more importantly, we obtain a differential inequality that elucidates the complex coupling between attitude and position tracking. 
Specifically, our analysis reveals how the tracking error in the attitude control loop influences the position control loop; information that can be used, e.g., in a robust MPC framework for provably-safe planning with model uncertainty \cite{langson2004robust,mayne2011tube,lopez2019dynamic}.
Other theoretical aspects of cascaded geometric control are also revealed, such as the robustness to parametric and non-parametric model uncertainties and how the control architecture eliminates unmatched uncertainties. 
We also discuss the practical limitations of the control architecture, including the need for acceleration and jerk feedback and the differentiation of feedforward models, all of which may suggest the need for new tracking controllers for highly dynamic aerial vehicles.

\textit{Notation:} 
For two vectors $a,b \in \mathbb{R}^3$, the cross product matrix is $[\cdot]_\times$ where $a \times b = [a]_\times b$.
For any vector $w \in \mathbb{R}^n$, the Euclidean norm is $\|w \|^2 = w^\top w$.
The Frobenius norm is an extension of the Euclidean norm to matrices where for $A \in \mathbb{R}^{m\times n}$ then $\|A\|^2_F = {\mathrm{tr}\{A A^\top\}} = {\sum_i^m \sum_j^n a_{ij}^2} = {\sum_i^m \|a_i\|^2}$ with $a_{ij}$ being the $ij$ element of $A$ and $a_i^\top \in \mathbb{R}^{n}$ is the $i$-th row vector of $A$. 
The normalized unit vector of $u \in \mathbb{R}^n$ with $\|u\| \neq 0$ is $\hat{u} = u / \|u\|$.
The upper right Dini derivative of a non-differentiable real-valued function $v$ is $D^+v(t) = \limsup_{h\rightarrow0^+} (v(t+h)-v(t))/h$.

\section{Preliminaries}
The rigid body dynamics of the class of aerial vehicles under consideration are
\begin{equation}
    \label{eq:dyn}
    \begin{aligned}
        m \ddot{r} & = R \, T \, \hat{e}_3 - m g \\
        \dot{R} & = R \, [\omega]_\times \\
        J \dot{\omega} & = - \omega \times J \omega + \tau 
    \end{aligned}
\end{equation}
where $r \in \mathbb{R}^3$ is the position vector, $R \in SO(3) = \{ R \in \mathbb{R}^{3\times 3} \, | \, R R^\top = R^\top R = I, ~ \det(R) = 1\}$ is the rotation matrix belonging to the special orthogonal group that describes the orientation of the vehicle's body frame with respect to an inertial frame, $\omega \in \mathbb{R}^3$ is the body angular velocity vector, and $\hat{e}_3 = [0,\,0,\,1]^\top$.  
The thrust vector is $T \, \hat{e}_3 \in \mathbb{R}^3$ (assumed to be along the body-frame z-axis) and the body torque vector is $\tau \in \mathbb{R}^3$; both are considered control inputs to the system \cref{eq:dyn}.
The model parameters are the vehicle's mass $m$, its moment of inertia tensor $J$, and the gravity vector $g \in \mathbb{R}^3$.

An equivalent (albeit non-unique) representation of the vehicle's orientation is the unit quaternion $q \in \mathbb{S}^3$ that lives on the 3-sphere. 
Using an ambient Euclidean space embedding, we have $\mathbb{S}^3 = \{ q \in \mathbb{R}^4\, | \, q^\top q = 1 \}$.
The unit quaternion can be interpreted as a 4-dimensional complex number with a real part $q^\circ$ and vector (imaginary) part $\vec{q}$ where $q = (q^\circ, \vec{q}\,) = q^\circ + q_x {i} + q_y {j} + q_z {k}$ with complex numbers ${i},{j},{k}$ that satisfy $i^2 = j^2 = k^2 = i j k = -1$.
Another way to express a unit quaternion is through the axis-angle representation.
If $\theta \in \mathbb{R}$ is the rotation angle and $\hat{n} \in \mathbb{R}^3$ is the rotation axis, then $q = ( \cos(\theta/2), \hat{n} \sin(\theta/2))$.

The set of unit quaternions forms a Lie group with the quaternion multiplication operator $\otimes$.
Given two unit quaternions $q, p \in \mathbb{S}^3$, we have $q \otimes p = (q^\circ p^\circ - \vec{q}^\top \vec{p}, \, q^\circ \vec{p} + p^\circ \vec{q} + \vec{q} \times \vec{p}\,).$ 
The conjugate quaternion is $q^* = (q^\circ, -\vec{q}\, )$, and it can be shown to be the inverse of quaternion $q$ since $q \otimes q^* = q^* \otimes q = (1, \vec{0}\,)$.
The kinematics of $q$ are $\dot{q} = \tfrac{1}{2} \, q \otimes (0, \omega)$. 
A unit quaternion $q$ is related to a rotation matrix $R$ via $R = I + 2 q^\circ [\,\vec{q}\,]_\times + 2 [\,\vec{q}\,]_\times^2 = (1-2 \|\,\vec{q}\,\|^2) I + 2 q^\circ [\,\vec{q}\,]_\times + 2 \vec{q} \, \vec{q}^\top$.
This relation can be derived via Rodrigues' rotation formula with the axis-angle representation, namely $R = I + \sin\theta [\hat{n}]_\times + (1-\cos\theta)[\hat{n}]^2_\times$.
Note that $q$ and $-q$ give the same rotation matrix $R$.
This is known as the double cover property and must be taken into account when using quaternions for control or estimation to avoid the unwinding phenomenon \cite{mayhew2011quaternion}.

\begin{figure}[t!]
    \vspace{0.1in}
    \centering
    \includegraphics[width=1.0\columnwidth]{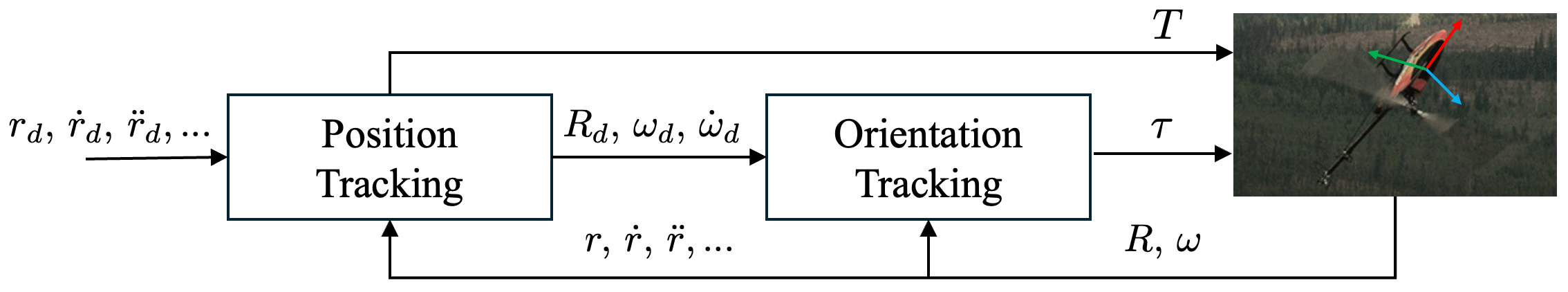}
    \caption{Cascaded geometric control architecture for aerial vehicles.}
    \label{fig:geo_cascaded_control}
    \vspace{-0.2in}
\end{figure}

\section{Main Results}
\label{sec:results}

\subsection{Overview}
This section contains a new analysis of cascaded geometric control for tracking a time-varying desired trajectory with sliding variables.
First, a globally exponentially convergent attitude controller is presented that uses a quaternion-based sliding variable proposed in \cite{lopez2021sliding}.
Second, an almost globally exponentially convergent position controller is presented that utilizes sliding variables to reveal new insights into the coupling between the inner- and outer-loop controllers.
Finally, we discuss the controller's theoretical guarantees and highlight some of its practical limitations/pitfalls. 
Before proceeding to the main technical results, we define the notion of a \emph{convergent system} \cite{pavlov2004convergent,pavlov2005convergent}, which is a useful form of stability when concerned with trajectory tracking.

\begin{definition}[\cite{pavlov2004convergent}]
\label{def:convergent}
    A dynamical systems is asymptotically (resp. exponentially) convergent if there exists a bounded time-varying solution such that the system asymptomatically (resp. exponentially) converges to.
\end{definition}

\begin{remark}
    The idea of a convergent system is closely related to incremental stability \cite{angeli2002lyapunov} and contraction theory \cite{lohmiller1998contraction}, the concepts being equivalent when the system evolves on a compact manifold. 
    However, because this work focuses on trajectory tracking rather than stability of neighboring trajectories, results are expressed using terminology from \cite{pavlov2004convergent}.
\end{remark}

\subsection{Globally Exponentially Convergent Attitude Control}
The sliding variable defined in \cite{lopez2021sliding} can be used to obtain exponentially convergent attitude dynamics. 
First, let the error quaternion be $q_e = q_d^* \otimes q$ where $q_d^*$ is the conjugate of the desired orientation and $q$ is the current orientation.
The error quaternion kinematics can then be shown to be
\begin{equation}
    \label{eq:qe_dynamics}
    \dot{q}_e = \frac{1}{2} q_e \otimes (\, 0, \, {\omega_e} \, ) =  \frac{1}{2} ( \, -\vec{q}_e^\top {\omega_e}, ~ q_e^\circ {\omega_e} + \vec{q}_e \times {\omega_e} \, ),
\end{equation}
where the angular velocity error is ${\omega_e} = {\omega} -  R_e^\top {\omega_d}$ with $R_e$ being the error rotation matrix formed from $q_e$. 
Note that $R_e$ is needed because ${\omega_d}$ and ${\omega}$ are tangent vectors on the three-sphere at \emph{different} locations. 
As a result, ${\omega_d}$ must be transformed into the same frame as ${\omega}$ for the vector difference to be meaningful.

Let the quaternion sliding variable ${s} \in \mathbb{R}^3$ be defined as
\begin{equation}
    \label{eq:s_quat}
    \begin{aligned}
        s & = {\omega_e} + 2\, \lambda  \, \mathrm{sgn}\left(q_e^\circ\right)\vec{q}_e \\ 
        & = \omega - R_e^\top \omega_d + 2\, \lambda  \, \mathrm{sgn}\left(q_e^\circ\right)\vec{q}_e \\
        & = \omega - \omega_r
    \end{aligned}
\end{equation}
where $\omega_r$ is obviously defined, $\lambda \in \mathbb{R}_{>0}$, and $\mathrm{sgn}(x) = x / |x|$ for $x \neq 0$ and $\mathrm{sgn}(0) = 1$.
Observe $\mathrm{sgn}(\cdot)$ differs from its standard definition; a distinction important for establishing global exponential convergence.
\cref{proposition:qe_local} shows that if ${s} = 0$ indefinitely via feedback then $\|\vec{q}_e\| \rightarrow 0$ exponentially at rate $\lambda$, which is equivalent to $q \rightarrow \pm q_d$ exponentially.

\begin{proposition}
    \label{proposition:qe_local}
    Let the manifold ${\mathcal{S}} = \{({q}_e,\omega_e) \in \mathbb{S}^3 \times \mathbb{R}^3 \, | \, {s({q}_e,\omega_e)}=0\}$ be invariant. Then, for any trajectory initialized on ${\mathcal{S}}$, the vector part of $q_e$ globally exponentially converges to zero with rate $\lambda$.
\end{proposition}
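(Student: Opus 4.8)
The plan is to use the assumed invariance of $\mathcal{S}$ to collapse the error-quaternion kinematics \cref{eq:qe_dynamics} to an autonomous scalar flow for $\|\vec{q}_e\|$. Since every trajectory starting on $\mathcal{S}$ keeps $s=0$, \cref{eq:s_quat} gives $\omega_e = -2\lambda\,\mathrm{sgn}(q_e^\circ)\,\vec{q}_e$; substituting this into \cref{eq:qe_dynamics} and using $\vec{q}_e \times \vec{q}_e = 0$ yields the reduced closed-loop dynamics
\begin{equation}
    \label{eq:reduced_qe_plan}
    \dot{q}_e^\circ = \lambda\,\mathrm{sgn}(q_e^\circ)\,\|\vec{q}_e\|^2, \qquad \dot{\vec{q}}_e = -\lambda\,|q_e^\circ|\,\vec{q}_e .
\end{equation}
From the second equation, $\|\vec{q}_e\|$ is non-increasing and, via $(q_e^\circ)^2 + \|\vec{q}_e\|^2 = 1$, $|q_e^\circ|$ is non-decreasing; whenever $q_e^\circ = 0$ we must have $\|\vec{q}_e\| = 1$, so the convention $\mathrm{sgn}(0) = 1$ forces $\dot{q}_e^\circ = \lambda > 0$ and the trajectory leaves $\{q_e^\circ = 0\}$ instantly, giving $|q_e^\circ(t)| > 0$ for all $t > 0$. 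It thus suffices to show $\|\vec{q}_e\| \to 0$ exponentially at rate $\lambda$ along \cref{eq:reduced_qe_plan}.

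To extract the rate, away from the equilibrium $\|\vec{q}_e\| = 0$ I would set $\|\vec{q}_e\| = \sin\alpha$ and $|q_e^\circ| = \cos\alpha$ with $\alpha \in [0,\pi/2]$; differentiating $\|\vec{q}_e\|$ along \cref{eq:reduced_qe_plan} gives $\dot{\alpha} = -\lambda\sin\alpha$, equivalently $\tfrac{d}{dt}\tan(\alpha/2) = -\lambda\tan(\alpha/2)$, so $\tan(\alpha(t)/2) = \tan(\alpha(0)/2)\,e^{-\lambda t}$. Since $\sin\alpha \le 2\tan(\alpha/2)$ and $\tan(\alpha(0)/2) \le 1$, this produces the global bound $\|\vec{q}_e(t)\| \le 2\tan(\alpha(0)/2)\,e^{-\lambda t} \le 2\,e^{-\lambda t}$, i.e., rate-$\lambda$ exponential convergence. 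Equivalently one can take $V = \|\vec{q}_e\|^2$, for which $\dot{V} = -2\lambda|q_e^\circ|V = -2\lambda\sqrt{1-V}\,V$, and solve this logistic-type comparison equation in closed form; the upper Dini derivative calculus recalled in the notation section is what legitimizes differentiating $|q_e^\circ|$ across $\{q_e^\circ = 0\}$. Finally, $\|\vec{q}_e\| \to 0$ forces $q_e \to (\pm 1, \vec{0})$, i.e., $q \to \pm q_d$, as claimed before the statement.

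I expect the main obstacle to be the non-smooth set $\{q_e^\circ = 0\}$, corresponding to a $180^\circ$ attitude error, where the right-hand side of \cref{eq:reduced_qe_plan} is discontinuous and where the \emph{ordinary} signum ($\mathrm{sgn}(0) = 0$) would produce $\omega_e = 0$ and a frozen trajectory --- this is exactly the obstruction behind the ``almost global'' qualifier of rotation-matrix designs, and the modified convention $\mathrm{sgn}(0) = 1$ is precisely what eliminates it, so the repulsion argument there must be made carefully (e.g., with Carath\'eodory/Filippov solutions, noting the discontinuity surface is crossed or repelling and never attracting). A secondary subtlety is sharpness: the one-line estimate $\dot{V} \le -\lambda V$ only certifies rate $\lambda/2$ for $\|\vec{q}_e\|$, so the sharp logistic estimate, or the $\alpha$-substitution above, is needed to obtain the advertised rate $\lambda$.
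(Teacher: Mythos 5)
Your proposal is correct and follows essentially the same route as the paper: substitute $\omega_e = -2\lambda\,\mathrm{sgn}(q_e^\circ)\vec{q}_e$ from the invariance of $\mathcal{S}$, integrate the resulting scalar flow for $\|\vec{q}_e\|$ exactly to obtain the bound $2e^{-\lambda t}$, and rule out $q_e^\circ = 0$ as an equilibrium via the modified signum. Your $\alpha$-substitution with $\tan(\alpha/2)$ is just a trigonometric rederivation of the paper's \cref{lemma:ode} (whose $y=\sqrt{1-x}$ is your $\cos\alpha$ and whose constant $c$ is your $\tan^2(\alpha(0)/2)$), yielding the identical estimate.
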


\begin{proof}
    Consider the Lyapunov function $V = \|\vec{q}_e\|^2$.
    Differentiating along the quaternion error dynamics \cref{eq:qe_dynamics} yields,
    \begin{equation}
        \dot{V} = 2 \vec{q}_e^\top\dot{\vec{q}}_e = \vec{q}_e^\top \left( q_e^\circ \omega_{e} + \omega_{e} \times \vec{q}_e \right) = q_e^\circ \vec{q}_e^\top \omega_{e},
        \label{eq:v_dot_local}
    \end{equation}
    where the third equality is obtained by noting that $\vec{q}_e^\top \omega_e \times \vec{q}_e = 0$.
    Since the manifold ${\mathcal{S}} $ is invariant, ${\omega_e} = -2 \, \lambda  \, \mathrm{sgn}\left(q_e^\circ\right) \, \vec{q}_e$ from \cref{eq:s_quat}, so \cref{eq:v_dot_local} becomes
    \begin{equation}
        \label{eq:v_dot_qe_local}
        \dot{V} = - 2 \, \lambda \, |q_e^\circ| \,  \|\vec{q}_e\|^2 = - 2 \, \lambda \,  V \, \sqrt{1-V},
    \end{equation}
    where we used the fact that $q_e$ is a unit quaternion. 
    To show exponential convergence, we can use \cref{lemma:ode} by replacing $x$ with $V$ and $\sigma$ with $2 \, \lambda$.
    In doing so, we get $V(t) \leq 4 e^{-2\lambda t} \implies \|\vec{q}_e(t)\| \leq 2 e^{-\lambda t}$ for $\|\vec{q}_e \| \neq 1$, which is a cleaner but looser bound than stated in \cref{lemma:ode}.
    Hence $\|\vec{q}_e(t)\| \rightarrow 0$ exponentially at rate $\lambda$.
    To establish global convergence, the radially unboundedness criterion cannot be used since $q_e$ lives on a compact manifold.
    It is sufficient to show that $\|\vec{q}_e\| = 1$, i.e., when $q_e^\circ = 0$, is not an equilibrium point. 
    The closed-loop kinematics for $q_e^\circ$ are $\dot{q}_e^\circ =  \lambda \, \mathrm{sgn}(q_e^\circ) \, \| \vec{q}_e \|^2$, which is non-zero when $q_e^\circ = 0$ (i.e., $\|\vec{q}_e\| = 1$) by the modified definition of $\mathrm{sgn}(\cdot)$. 
    Therefore, for any trajectory initialized on the invariant manifold $\mathcal{S}$ then $\|\vec{q}_e\|$ globally exponentially converges to zero as desired.
\end{proof}

The premise of \cref{proposition:qe_local} is that the manifold $\mathcal{S} = \{({q}_e,\omega_e) \in \mathbb{S}^3 \times \mathbb{R}^3 \, | \, {s({q}_e,\omega_e)}=0\}$ is invariant. 
This can be achieved by selecting the body torque vector $\tau$ that drive $s$ to the origin exponentially, as shown next. 

\begin{theorem}
    \label{thm:att_control}
    Let $q_d(t)$ be a smooth desired quaternion trajectory.
    The quaternion sliding variable from \cref{eq:s_quat} exponentially converges to zero with the torque control law
    \begin{equation*}
        \tau = \omega \times J \omega - J \dot{\omega}_r - K s
    \end{equation*}
    where $K \in \mathcal{S}^3_+$ is a symmetric positive definite matrix.
\end{theorem}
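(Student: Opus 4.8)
The plan is to show that the proposed torque law reduces the $s$-dynamics to a linear, exponentially stable system, and then close with a quadratic Lyapunov argument. First I would differentiate the sliding variable $s = \omega - \omega_r$ to obtain $J\dot{s} = J\dot{\omega} - J\dot{\omega}_r$, substitute the rotational dynamics $J\dot{\omega} = -\omega\times J\omega + \tau$ from \cref{eq:dyn}, and insert the control law. The gyroscopic term $\omega\times J\omega$ and the feedforward term $J\dot{\omega}_r$ are matched inputs and are exactly cancelled by their counterparts in $\tau$, so a direct computation leaves the closed-loop sliding dynamics $J\dot{s} = -Ks$ — a linear time-invariant system in $s$ since $J$ and $K$ are constant.

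Next I would take the Lyapunov candidate $V_s = \tfrac{1}{2}\,s^\top J s$, which is positive definite because $J \in \mathcal{S}^3_+$. Differentiating along the closed loop gives $\dot{V}_s = s^\top J\dot{s} = -s^\top K s \le -\lambda_{\min}(K)\,\|s\|^2$, and since $V_s \le \tfrac{1}{2}\lambda_{\max}(J)\,\|s\|^2$ this yields the differential inequality $\dot{V}_s \le -\tfrac{2\lambda_{\min}(K)}{\lambda_{\max}(J)}\,V_s$. A comparison argument (of the same flavor as the one behind \cref{lemma:ode}) then gives $V_s(t) \le V_s(0)\,e^{-2\gamma t}$ with $\gamma = \lambda_{\min}(K)/\lambda_{\max}(J)$, and using $\tfrac{1}{2}\lambda_{\min}(J)\|s\|^2 \le V_s$ one concludes $\|s(t)\| \le \sqrt{\lambda_{\max}(J)/\lambda_{\min}(J)}\,\|s(0)\|\,e^{-\gamma t}$; hence $s \to 0$ exponentially.

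The point requiring care — and what I expect to be the main obstacle — is that $\omega_r$, and therefore $\dot{\omega}_r$ and $\tau$, contains $2\lambda\,\mathrm{sgn}(q_e^\circ)\,\vec{q}_e$, which is discontinuous wherever $q_e^\circ = 0$ (where $\|\vec{q}_e\| = 1$), so $s$ itself may jump. On each maximal interval over which $\mathrm{sgn}(q_e^\circ)$ is constant the argument above is valid verbatim with the Dini derivative $D^+V_s = \dot{V}_s$, so the real work is to verify the jumps are benign. I would argue that the modified convention $\mathrm{sgn}(0) = 1$ forces $\dot{q}_e^\circ$ to take a definite sign in a neighborhood of $q_e^\circ = 0$ once $\|s\|$ has dropped below a fixed threshold — the same mechanism that makes $q_e^\circ = 0$ non-invariant in \cref{proposition:qe_local} — so zero crossings cease after finite time and no Zeno behavior occurs; combined with the exponential decay of $V_s$ between crossings this restores the global exponential estimate. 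Finally, I would flag that the control law needs $\dot{\omega}_r$, hence $\dot{\omega}_d$ (angular-acceleration feedforward), which is precisely why $q_d$ is assumed smooth and which anticipates the practical limitations discussed afterward.
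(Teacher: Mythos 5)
Your proposal is correct and follows the same basic route as the paper's proof: differentiate $s$, substitute the rotational dynamics from \cref{eq:dyn}, cancel the gyroscopic and feedforward terms with the control law, and close with a quadratic Lyapunov argument. (Like the paper's own proof, you effectively use $\tau = \omega \times J\omega + J\dot{\omega}_r - Ks$; the minus sign on $J\dot{\omega}_r$ in the theorem statement must be a typo, since with it the feedforward does not cancel and one is left with $J\dot{s} = -2J\dot{\omega}_r - Ks$.) The one substantive difference is the Lyapunov function. The paper takes $V = \|s\|^2$, for which the closed loop actually gives $\dot{V} = -2\,s^\top J^{-1}K\,s$, which the paper then writes as $-2\,s^\top K s$; strictly, that quadratic form is governed by the symmetric part of $J^{-1}K$, which is not automatically positive definite for arbitrary $J, K \in \mathcal{S}^3_+$, so the paper's bound needs either $J$ and $K$ to commute or the gain to be redefined as $JK$. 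Your $J$-weighted candidate $V_s = \tfrac{1}{2}s^\top J s$ sidesteps this entirely, yielding $\dot{V}_s = -s^\top K s \leq -\lambda_{\min}(K)\|s\|^2$ for any symmetric positive definite $K$, at the modest cost of a condition-number factor $\sqrt{\lambda_{\max}(J)/\lambda_{\min}(J)}$ in the final estimate. Your version is the more careful of the two.

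On the discontinuity of $\mathrm{sgn}(q_e^\circ)$: you are right that this is a gap the paper's proof is silent about, since $s$ (and hence $V_s$) jumps by magnitude $4\lambda$ whenever $q_e^\circ$ crosses zero, where $\|\vec{q}_e\| = 1$. Be aware, though, that your proposed repair is itself incomplete: the definite sign of $\dot{q}_e^\circ$ near $q_e^\circ = 0$ is established in \cref{proposition:qe_local} only \emph{on} the sliding manifold $s = 0$, whereas the crossings you need to rule out occur during the reaching phase when $\omega_e$ is not yet slaved to $\vec{q}_e$. Completing that step (or restricting to trajectories along which $q_e^\circ$ does not change sign, as the paper implicitly does) is required for the global claim.
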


\begin{proof}
    Consider the candidate Lyapunov function $V = \|s\|^2 $ where $s$ is the \cref{eq:s_quat}.
    Differentiating and using \cref{eq:s_quat},
    \begin{equation*}
        \begin{aligned}
            \dot{V} &= 2 s^\top \left( \dot{\omega} - \dot{\omega}_r \right) \\
            & = 2 s^\top \left( J^{-1} (-\omega \times J \omega + \tau ) - \dot{\omega}_r  \right).
        \end{aligned}
    \end{equation*}
    Picking $\tau = \omega \times J \omega + J \dot{\omega}_r - K s$ where $K \in \mathcal{S}^3_+$, then $\dot{V} = - 2 s^\top K s$.
    Assuming there exists a scalar $\upsilon \in \mathbb{R}_{>0}$ such that $K \succeq \upsilon I$, then $\dot{V} \leq - 2 \upsilon V \implies V(t) \leq V(0) e^{-2 \upsilon t}$ so $\lim_{t\rightarrow \infty} V(t) \rightarrow 0$ exponentially at rate $2 \upsilon$. 
    Since $V = \|s\|^2$, then $\lim_{t\rightarrow \infty}\|s(t)\| \rightarrow 0$ exponentially at rate $\upsilon$.
\end{proof}

\begin{remark}
    The expanded form of the attitude control law used in \cref{thm:att_control} is
    \begin{equation}
        \label{eq:att_control}
        \tau = \omega \times J \omega + J \left(\dot{R}_e^\top \omega_d + R_e^\top \dot{\omega}_d - 2 \lambda \mathrm{sgn}(q_e^\circ) \dot{\vec{q}}_e\right) - K s,
    \end{equation}
    which \emph{guarantees} the vehicle's orientation will exponentially converge to any time-varying, dynamically feasible trajectory given by the tuple $(q_d(t), \omega_d(t))$.
    Formally, this means that the closed-loop attitude dynamics are \emph{exponentially convergent} per \cref{def:convergent}, a property that will prove to be critical for proving closed-loop position tracking via cascaded control.
    An important byproduct of this result is that the closed-loop attitude dynamics immediately inherent robustness to parametric and non-parametric uncertainties. 
    These points are summarized in the following corollary. 
\end{remark}

\begin{corollary}
    The closed-loop attitude dynamics are exponentially convergent with the torque control law \cref{eq:att_control}.
\end{corollary}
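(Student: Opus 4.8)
The plan is to obtain the corollary as a cascade consequence of \cref{thm:att_control} and \cref{proposition:qe_local}. \cref{thm:att_control} already shows that the control law \cref{eq:att_control} renders $\|s(t)\| \le \|s(0)\|\,e^{-\upsilon t}$, so the sliding manifold $\mathcal{S}$ is exponentially attractive rather than merely invariant. What remains is to propagate this exponential decay through the reduced-order kinematics studied in \cref{proposition:qe_local}, i.e.\ to show that $s\to 0$ exponentially forces $\vec q_e\to 0$ exponentially, and then to recognize that the resulting exponential convergence of the error tuple $(q_e,\omega_e)$ to $((1,\vec 0),\vec 0)$ is exactly the statement that the closed-loop attitude dynamics converge exponentially to the bounded reference $(q_d(t),\omega_d(t))$ in the sense of \cref{def:convergent}.

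Concretely, I would revisit $V=\|\vec q_e\|^2$, but now substitute the \emph{full} expression $\omega_e = s - 2\lambda\,\mathrm{sgn}(q_e^\circ)\,\vec q_e$ from \cref{eq:s_quat} into \cref{eq:v_dot_local}, which gives
\begin{equation*}
    \dot V = q_e^\circ\,\vec q_e^\top s - 2\lambda\,|q_e^\circ|\,\|\vec q_e\|^2 \le \|\vec q_e\|\,\|s(0)\|\,e^{-\upsilon t} - 2\lambda\,|q_e^\circ|\,\|\vec q_e\|^2,
\end{equation*}
using $|q_e^\circ|\le 1$, $\|\vec q_e\|\le 1$, and Cauchy--Schwarz. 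Away from $\|\vec q_e\|=1$ this yields the scalar comparison inequality $D^+\|\vec q_e\| \le -\lambda\,|q_e^\circ|\,\|\vec q_e\| + \tfrac12\|s(0)\|\,e^{-\upsilon t}$, i.e.\ an exponentially stable (possibly non-smooth) linear system driven by an exponentially decaying input; the standard perturbed/cascade comparison argument — mirroring the use of \cref{lemma:ode} in \cref{proposition:qe_local} — then gives $\|\vec q_e(t)\|\to 0$ exponentially at a rate governed by $\min\{\lambda,\upsilon\}$. Finally, $\omega_e = s - 2\lambda\,\mathrm{sgn}(q_e^\circ)\vec q_e$ is a sum of two exponentially decaying signals, hence also decays exponentially, so the entire error tuple converges exponentially; since a smooth $q_d(t)$ on the compact set $\mathbb{S}^3$ together with bounded $\omega_d(t)$ is a bounded solution of \cref{eq:dyn}, \cref{def:convergent} is satisfied. (Equivalently, one could use a composite Lyapunov function $V=\|\vec q_e\|^2 + c\,\|s\|^2$, but the cascade route is cleaner and reuses the earlier estimates verbatim.)

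The main obstacle, as in \cref{proposition:qe_local}, is the non-smoothness introduced by $\mathrm{sgn}(q_e^\circ)$ and, more subtly, ruling out $q_e^\circ=0$ (equivalently $\|\vec q_e\|=1$) as a spurious attractor of the \emph{perturbed} kinematics: here $\dot q_e^\circ = \lambda\,\mathrm{sgn}(q_e^\circ)\|\vec q_e\|^2 - \tfrac12\vec q_e^\top s = \lambda - \tfrac12\vec q_e^\top s$ at $q_e^\circ=0$, which stays strictly positive whenever $\tfrac12\|s\| < \lambda$; since $\|s(t)\|\to 0$ this holds for all $t$ beyond a finite time, so the configuration $\|\vec q_e\|=1$ cannot be maintained and the exponential estimate survives. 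A secondary point worth stating explicitly is the double cover: exponential convergence of $\vec q_e$ to zero means $q\to\pm q_d$, which is precisely convergence of the orientation to $q_d$ on $SO(3)$, so there is no tension with \cref{def:convergent}. Assembling these observations completes the proof.
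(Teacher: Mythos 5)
Your proposal is correct, but it does considerably more work than the paper, whose entire proof of this corollary is the single sentence ``Follows from the definition of an exponentially convergent system (see \cref{def:convergent})'' --- i.e., the author simply composes \cref{thm:att_control} (the sliding variable $s$ decays exponentially) with \cref{proposition:qe_local} (on the manifold $s=0$, $\|\vec q_e\|$ decays exponentially) and declares the result immediate. Strictly speaking that composition is not immediate: \cref{proposition:qe_local} only covers trajectories initialized \emph{on} the invariant manifold, whereas \cref{thm:att_control} only guarantees exponential \emph{attraction} to it, so a cascade or perturbation argument is genuinely needed to conclude anything about trajectories starting off the manifold. Your proof supplies exactly that missing step: substituting the full $\omega_e = s - 2\lambda\,\mathrm{sgn}(q_e^\circ)\vec q_e$ from \cref{eq:s_quat} into the $V=\|\vec q_e\|^2$ analysis to get $D^+\|\vec q_e\| \le -\lambda\,|q_e^\circ|\,\|\vec q_e\| + \tfrac12\|s(0)\|e^{-\upsilon t}$, then invoking the exponentially-decaying-input comparison result (the paper's \cref{lemma:pert_input} and its corollary are tailor-made for this), and separately ruling out $q_e^\circ=0$ as an attractor of the \emph{perturbed} kinematics via $\dot q_e^\circ \ge \lambda - \tfrac12\|s\|$ at $\|\vec q_e\|=1$. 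This last point is a genuine addition --- the paper only excludes $\|\vec q_e\|=1$ for the unperturbed on-manifold dynamics --- and your handling of it is sound once $\|s(t)\|<2\lambda$, which occurs in finite time. The one caveat worth flagging is that the damping coefficient $\lambda|q_e^\circ|$ is not uniformly bounded below until $|q_e^\circ|$ is bounded away from zero, so the clean rate $\min\{\lambda,\upsilon\}$ is an asymptotic rate (with a transient-dependent overshoot), exactly as in the paper's own \cref{proposition:qe_local} where the factor $\sqrt{1-V}$ plays the same role. In short: your route is the honest, fully worked-out version of the argument the paper asserts by fiat, at the cost of length; the paper's one-liner buys brevity by leaving the cascade composition to the reader.
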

\begin{proof}
    Follows from the definition of an exponentially convergent system (see \cref{def:convergent}). 
\end{proof}

\subsection{Exponentially Convergent Cascaded Geometric Control}
With the results from the previous section in-hand, namely, global exponentially convergent attitude tracking with controller \cref{eq:att_control}, we can prove almost global exponentially convergent position tracking is achievable via cascaded control.
The analysis we present in this subsection will primarily utilize rotation matrices to keep notation manageable.
Although rotation matrices are used here, the results of the previous subsection are still pertinent.
We will adopt the same convention as before, i.e., $q \leftrightarrow R$ and $q_d \leftrightarrow R_d$, and we define the rotation matrix orientation error as $R_e = R_d^\top R$. 
The following lemma bridges the quaternion-based analysis of the previous section with rotation matrices and, in doing so, will be useful for the subsequent analysis. 

\begin{lemma}
    \label{lemma:rot_frob}
    Let $R \in SO(3)$ be a rotation matrix formed from $q = (q^\circ, \vec{q}\,)$. 
    The Frobenius norm of $R - I$ is
    \begin{equation}
        \label{eq:frob_rot}
        \|R - I\|_F = 2 \sqrt{2} \|\, \vec{q} \,\|.
    \end{equation}
\end{lemma}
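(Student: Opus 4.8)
The plan is to expand the squared Frobenius norm from its definition, $\|R - I\|_F^2 = \mathrm{tr}\{(R-I)(R-I)^\top\}$, and then exploit the orthogonality $R R^\top = R^\top R = I$ together with $\mathrm{tr}(R) = \mathrm{tr}(R^\top)$ to collapse the expression into a trace of $R$ alone. Concretely,
\begin{equation*}
  \|R - I\|_F^2 = \mathrm{tr}\{R R^\top - R - R^\top + I\} = 3 - 2\,\mathrm{tr}(R) + 3 = 6 - 2\,\mathrm{tr}(R),
\end{equation*}
so the problem reduces to evaluating $\mathrm{tr}(R)$ in terms of $\vec{q}$.

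For the second step I would substitute the quaternion-to-rotation formula already recorded in the Preliminaries, $R = (1 - 2\|\vec{q}\|^2) I + 2 q^\circ [\vec{q}]_\times + 2\,\vec{q}\,\vec{q}^\top$. Taking the trace term by term: the skew-symmetric part $[\vec{q}]_\times$ has zero diagonal and contributes nothing, $\mathrm{tr}(\vec{q}\,\vec{q}^\top) = \|\vec{q}\|^2$, and $\mathrm{tr}\big((1-2\|\vec{q}\|^2) I\big) = 3(1 - 2\|\vec{q}\|^2)$, which gives $\mathrm{tr}(R) = 3 - 6\|\vec{q}\|^2 + 2\|\vec{q}\|^2 = 3 - 4\|\vec{q}\|^2$. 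Substituting back, $\|R-I\|_F^2 = 6 - 2(3 - 4\|\vec{q}\|^2) = 8\|\vec{q}\|^2$, and taking the positive square root yields $\|R-I\|_F = 2\sqrt{2}\,\|\vec{q}\|$.

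As a sanity check and alternative route, one can instead use the axis-angle/Rodrigues form $R = I + \sin\theta\,[\hat{n}]_\times + (1-\cos\theta)[\hat{n}]_\times^2$, for which $\mathrm{tr}(R) = 1 + 2\cos\theta = 3 - 4\sin^2(\theta/2)$, hence $\|R-I\|_F^2 = 8\sin^2(\theta/2)$; since $\vec{q} = \hat{n}\sin(\theta/2)$ with $\|\hat{n}\| = 1$, this is again $8\|\vec{q}\|^2$. There is essentially no obstacle here: the only points requiring care are invoking orthogonality correctly to eliminate the terms quadratic in $R$ and remembering that the cross-product matrix is traceless. Notably, the unit-norm constraint $(q^\circ)^2 + \|\vec{q}\|^2 = 1$ is not even needed in the trace-based derivation; it would be required only if one expanded $R - I = 2 q^\circ [\vec{q}]_\times + 2 [\vec{q}]_\times^2$ directly and computed $\mathrm{tr}\{(R-I)(R-I)^\top\}$ through powers of $[\vec{q}]_\times$, which is the longer of the two paths.
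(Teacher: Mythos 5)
Your proof is correct and follows essentially the same route as the paper: both reduce $\|R-I\|_F^2$ to $6 - 2\,\mathrm{tr}(R)$ via orthogonality and linearity of the trace, then evaluate $\mathrm{tr}(R)$ from the quaternion-to-rotation formula (you use the $(1-2\|\vec{q}\|^2)I + 2\vec{q}\,\vec{q}^\top$ form, the paper the equivalent $I + 2[\vec{q}]_\times^2$ form and the trace of $[\vec{q}]_\times^2$). The axis-angle sanity check and the observation about not needing the unit-norm constraint in the trace step are nice extras but not substantive departures.
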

\begin{proof}
    By the definition of the Frobenius norm,
    \begin{equation*}
        \begin{aligned}
            \|R - I\|^2_F & = \mathrm{tr}\{(R-I)(R-I)^\top\} \\
                    & = \mathrm{tr}\{ R R^\top - R - R^\top + I\} \\
                    & = \mathrm{tr}\{ 2 I - R - R^\top \}.
        \end{aligned}
    \end{equation*}
    Since $\mathrm{tr}\{R\} = \mathrm{tr}\{R^\top\}$ and $\mathrm{tr}\{\cdot\}$ is a linear map, then
    \begin{equation*}
        \begin{aligned}
            \|R - I\|^2_F & = 2\, \mathrm{tr}\{I\} - 2 \, \mathrm{tr}\{R\} \\
                & = 6 - 2 \, \mathrm{tr}\{ I + 2 q^\circ [ \, \vec{q} \, ]_\times + 2 [\, \vec{q}\,]_\times^2 \} \\ 
                & = 4 \, \mathrm{tr}\{ [\, \vec{q} \, ]_\times^2 \},
        \end{aligned}
    \end{equation*}
    where the second equality follows from Rodrigues' formula and the third equality follows from the fact that $\mathrm{tr}\{[\cdot]_\times\} = 0$.
    The trace of $[\, \vec{q} \, ]_\times^2$ can be calculated directly and shown to be $\mathrm{tr}\{[\,\vec{q}\,]^2_\times\} = 2\|\,\vec{q}\,\|^2$.
    Therefore, we have
    \begin{equation*}
        \|R - I\|_F^2 = 8 \|\,\vec{q}\,\|^2 \implies \|R - I\|_F = 2 \sqrt{2} \| \, \vec{q} \, \|. \qedhere
    \end{equation*}
\end{proof}

\begin{remark}
    \label{remark:rot_frob}
    An immediate extension to \cref{lemma:rot_frob} is that given another rotation matrix $\tilde{R} \in SO(3)$, then
    \begin{equation*}
        \|\tilde{R}\, R \, \tilde{R}^\top - I \|_F = 2 \sqrt{2} \, \|\,\vec{q}\,\|.
    \end{equation*}
    where again $R$ is the rotation matrix formed from $q$.
\end{remark}

\begin{remark}
    \Cref{lemma:rot_frob} essentially states that the ``distance" of a frame to the upright orientation (which is identity) is just the vector part of the quaternion that describes the frame's orientation. 
    The notion of distance is captured via the Frobenius norm, and is a deviation from the standard Euclidean distance because rotation matrices live in $SO(3)$.
\end{remark}

We will employ a position sliding variable to streamline stability analysis. 
Specifically, if we define the position tracking error to be $r_e = r - r_d$, then an obvious choice for the position sliding variable $\mathfrak{s} \in \mathbb{R}^3$ is 
\begin{equation}
    \label{eq:s_pos}
    \begin{aligned}
        \mathfrak{s} & = \dot{r}_e + \alpha \, r_e \\ 
        & = \dot{r} - \dot{r}_d + \alpha \, r_e \\ 
        & = \dot{r} - \dot{r}_r
    \end{aligned}
\end{equation}
where $\alpha \in \mathbb{R}_{>0}$.
Since \cref{eq:s_pos} is just a first-order linear system, it is trivial to prove that if the manifold $\mathrm{S} = \{ (r_e,\dot{r}_e) \in \mathbb{R}^3 \times \mathbb{R}^3 \, | \, \mathfrak{s}(r_e,\dot{r}_e) = 0 \}$ is invariant, then for any trajectory starting on $\mathrm{S}$, we have $r_e$ will converge to zero exponentially at rate $\alpha$.
More generally, as will be shown later, if $\|\mathfrak{s}\|$ converges to zero exponentially, then so too does $\|r_e\|$.

The next theorem utilizes \cref{lemma:pert_rate,lemma:pert_input} (stated in the Appendix) to prove exponentially convergent closed-loop position dynamics are achievable with \cref{eq:att_control} and a linear position controller.

\begin{theorem}
    \label{thm:pos_control}
    Let $r_d(t)$ be a smooth desired position trajectory.
    The position sliding variable from \cref{eq:s_pos} converges to zero exponentially if the closed-loop attitude dynamics are exponentially convergent, and if there exists a desired rotation matrix $R_d \in SO(3)$ and thrust $T$ such that
    \begin{equation}
        \label{eq:Rd}
        R_d \, T \, \hat{e}_3 =  m \left( \ddot{r}_d - \alpha \, \dot{r}_e + g - \mathrm{K} \, \mathfrak{s} \right),
    \end{equation}
    where $\mathrm{K} \in \mathcal{S}^3_+$ is a symmetric positive definite matrix. 
\end{theorem}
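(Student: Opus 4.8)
The plan is to derive the closed-loop dynamics of $\mathfrak{s}$, recognize the attitude-to-position coupling as a vanishing perturbation, and then close the argument with a differential inequality. First I would differentiate $\mathfrak{s} = \dot{r} - \dot{r}_r$, where $\dot{r}_r = \dot{r}_d - \alpha r_e$, so that $\dot{\mathfrak{s}} = \ddot{r} - \ddot{r}_r$ with $\ddot{r}_r = \ddot{r}_d - \alpha \dot{r}_e$. Substituting the translational dynamics $m\ddot{r} = R\,T\,\hat{e}_3 - m g$ from \cref{eq:dyn} and then eliminating $m(\ddot r_d - \alpha \dot r_e)$ via the control law \cref{eq:Rd} gives the compact form
\begin{equation*}
    \dot{\mathfrak{s}} = -\mathrm{K}\,\mathfrak{s} + \tfrac{1}{m}\,(R - R_d)\,T\,\hat{e}_3 .
\end{equation*}
The first term is the nominal exponentially stable part; the second term is the coupling from the attitude loop, and it vanishes exactly when $R = R_d$, i.e.\ under perfect attitude tracking.

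The next step is to bound this coupling term. I would write $(R-R_d)T\hat e_3 = (R R_d^\top - I)\,R_d T \hat e_3 = (R R_d^\top - I)\,F_d$, where $F_d := m(\ddot{r}_d - \alpha \dot{r}_e + g - \mathrm{K}\,\mathfrak{s})$ is the commanded force from \cref{eq:Rd}. Since $R R_d^\top = R_d R_e R_d^\top$ with $R_e$ the rotation formed from $q_e$, \cref{lemma:rot_frob} together with \cref{remark:rot_frob} yields $\|R R_d^\top - I\|_F = 2\sqrt{2}\,\|\vec{q}_e\|$. Because the closed-loop attitude dynamics are assumed exponentially convergent, $\|\vec{q}_e(t)\|$ decays exponentially, so the coupling term is bounded by an exponentially decaying factor times $\|F_d\|$. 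Substituting $\dot{r}_e = \mathfrak{s} - \alpha r_e$ into $F_d$ and using boundedness of the smooth desired trajectory, $\|F_d\|$ is at most affine in the position-loop state $(\mathfrak{s}, r_e)$, i.e.\ $\tfrac1m\|(R-R_d)T\hat e_3\| \le 2\sqrt2\,\|\vec q_e(t)\|\,\big(\beta_d + c_1\|\mathfrak{s}\| + c_2\|r_e\|\big)$ for constants $\beta_d, c_1, c_2$ depending only on $\alpha$, $\mathrm{K}$, and bounds on $\ddot r_d + g$.

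With $K \succeq \upsilon I$ I would then pass to the scalar comparison variable $v = \|\mathfrak{s}\|$ and obtain, using the upper Dini derivative, $D^+\|\mathfrak{s}\| \le -\upsilon\|\mathfrak{s}\| + 2\sqrt2\,\|\vec q_e(t)\|\,(\beta_d + c_1\|\mathfrak{s}\| + c_2\|r_e\|)$, paired with the elementary inequality $D^+\|r_e\| \le -\alpha\|r_e\| + \|\mathfrak{s}\|$ coming from \cref{eq:s_pos}. This is precisely the differential inequality exhibiting the attitude--position coupling. Because the destabilizing self-gain $2\sqrt2 c_1\|\vec q_e(t)\|$ is exponentially decaying, past a finite time it is dominated by $\upsilon$, so \cref{lemma:pert_rate} lets me absorb it into an effective decay rate for $\|\mathfrak{s}\|$; the remaining forcing on the $\mathfrak{s}$-channel is an exponentially decaying signal plus the $\|r_e\|$-coupling, and since the two-state comparison system is then a stable cascade driven by exponentially decaying inputs, repeated application of \cref{lemma:pert_input} yields $\|\mathfrak{s}(t)\| \to 0$ exponentially; exponential decay of $\|r_e(t)\|$ follows immediately because \cref{eq:s_pos} is a stable first-order filter driven by $\mathfrak{s}$.

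The main obstacle is that the coupling term is \emph{state-dependent} through the commanded force $F_d$, so the perturbation to the $\mathfrak{s}$-dynamics is not an a priori bounded exogenous signal. I would handle this by first ruling out finite escape on the initial transient --- the right-hand side is locally Lipschitz with an affine-in-state bound, so solutions exist and are bounded on compact time intervals (equivalently, a linear Grönwall estimate using $\|\vec q_e(t)\| \le \text{const}$) --- and only then invoking the asymptotic decay of $\|\vec q_e\|$ to get the effective-rate domination. Getting the constant bookkeeping right (how the attitude rate $\lambda$, the gain $\upsilon$, and $\alpha$ combine, and which initial conditions must be excluded for \cref{eq:Rd} to admit a solution, i.e.\ the ``almost global'' caveat when the commanded force vanishes) is where the genuine care is required; the algebra leading to the compact $\mathfrak{s}$-dynamics and the Frobenius-norm bound is routine.
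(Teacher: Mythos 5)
Your proposal follows the paper's route in all essentials: the same closed-loop sliding dynamics (your $\dot{\mathfrak{s}} = -\mathrm{K}\mathfrak{s} + \tfrac{1}{m}(RR_d^\top - I)F_d$ is exactly the paper's \cref{eq:s_dot} with the two perturbation terms regrouped), the same use of \cref{lemma:rot_frob} and \cref{remark:rot_frob} to convert $\|RR_d^\top - I\|_F$ into $2\sqrt{2}\|\vec{q}_e\|$, and the same closure via a perturbed scalar differential inequality and \cref{lemma:pert_rate,lemma:pert_input}. The one genuine difference is your treatment of the forcing term's state dependence. The paper simply \emph{assumes} $\ddot{r}_d$ and $\dot{r}_e$ are bounded so that $\Delta_2(t)$ in \cref{eq:s_dot_norm} decays exponentially; since $\dot{r}_e$ is part of the closed-loop solution being analyzed, that assumption is doing real work that the paper does not discharge. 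You instead bound $\|F_d\|$ affinely in $(\|\mathfrak{s}\|, \|r_e\|)$, rule out finite escape by a Gr\"onwall estimate, and close the loop once the attitude error has made the cross-gain small. That is the more self-contained argument, at the cost of extra bookkeeping. One caution: your two-state comparison system is not literally a cascade --- $\|r_e\|$ feeds back into the $\mathfrak{s}$-channel through $F_d$ --- so the final step should be stated as a small-gain/vanishing-gain argument (e.g., a weighted sum $\|\mathfrak{s}\| + \mu\|r_e\|$ becomes a strict Lyapunov-like function once $\|\vec{q}_e(t)\|$ is small enough) rather than ``repeated application of \cref{lemma:pert_input}.'' With that wording fixed, your proof is correct and, on this point, tighter than the paper's.
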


\begin{proof}
    For convenience, let $u$ be equal to the right hand side of \cref{eq:Rd}.
    Differentiating \cref{eq:s_pos} and substituting the translational dynamics from \cref{eq:dyn}, we get
    \begin{equation*}
        \begin{aligned}
            \dot{\mathfrak{s}} & = \ddot{r} - \ddot{r}_d + \alpha \, \dot{r}_e \\
            & = \frac{1}{m} R \, T \, \hat{e}_3 - g - \ddot{r}_d + \alpha \, \dot{r}_e.
        \end{aligned}
    \end{equation*}
    Since $R_d \, T \, \hat{e}_3 = m u$, then $T \, \hat{e}_3 = m R_d^\top u$, so the above expression becomes
    \begin{equation*}
        \begin{aligned}
            \dot{\mathfrak{s}} & = R \, R_d^\top \, u - g - \ddot{r}_d + \alpha \, \dot{r}_e.
        \end{aligned}
    \end{equation*}
    Substituting for $u$, 
    \begin{equation*}
        \dot{\mathfrak{s}} = - R \, R_d^\top \mathrm{K} \, \mathfrak{s} + \left(R \, R_d^\top - I\right) \left( \ddot{r}_d - \alpha \, \dot{r}_e + g \right).
    \end{equation*}
    Recalling that $q_e = q_d^* \otimes q \implies R_e = R_d^\top R$, we then have 
    \begin{equation}
        \label{eq:s_dot}
        \dot{\mathfrak{s}} = - R_d \, R_e \, R_d^\top \mathrm{K} \, \mathfrak{s} + \left(R_d \, R_e \, R_d^\top - I\right) \left( \ddot{r}_d - \alpha \, \dot{r}_e + g \right).
    \end{equation}
    Let $V = \|\mathfrak{s}\|^2$ be a candidate Lyapunov function. 
    Then 
    \begin{equation*}
        \begin{aligned}
            \dot{V} = & - 2 \, \mathfrak{s}^\top R_d \, R_e \, R_d^\top \mathrm{K} \, \mathfrak{s} \\
            & +  2 \, \mathfrak{s}^\top\left(R_d \, R_e \, R_d^\top - I\right) \left( \ddot{r}_d - \alpha \, \dot{r}_e + g \right), 
        \end{aligned}
    \end{equation*}
    which can be equivalently written as
    \begin{equation*}
        \begin{aligned}
            \dot{V} =&  - 2 \, \mathfrak{s}^\top \mathrm{K} \, \mathfrak{s} - 2 \, \mathfrak{s}^\top \left( R_d \, R_e \, R_d^\top - I \right) \mathrm{K} \, \mathfrak{s} \\
            &  + 2 \, \mathfrak{s}^\top \left( \left(R_d \, R_e \, R_d^\top - I\right) \left( \ddot{r}_d - \alpha \, \dot{r}_e + g \right) \right).
        \end{aligned}
    \end{equation*}
    Since $\mathrm{K} \in \mathcal{S}^3_+$, there exists $\rho \in \mathbb{R}_{>0}$ such that $ \mathrm{K} \succeq \rho I$, so the above expression can be further simplified to
    \begin{equation*}
        \begin{aligned}
            \dot{V} \leq \, &  - 2 \, \rho \, \| \mathfrak{s}\|^2 - 2 \,\rho \, \mathfrak{s}^\top \left( R_d \, R_e \, R_d^\top - I \right) \mathfrak{s} \\
            &  + 2 \, \mathfrak{s}^\top \left( \left(R_d \, R_e \, R_d^\top - I\right) \left( \ddot{r}_d - \alpha \, \dot{r}_e + g \right) \right).
        \end{aligned}
    \end{equation*}
    Applying Cauchy-Schwarz and using \cref{lemma:frob}, 
    \begin{equation*}
        \begin{aligned}
            \dot{V} \leq &  - 2 \, \rho \, \| \mathfrak{s}\|^2 + 2 \, \rho \, \| R_d \, R_e \, R_d^\top - I \|_F \, \|\mathfrak{s} \|^2 \\
            &  + 2 \, \|R_d \, R_e \, R_d^\top - I \|_F \,  \| \ddot{r}_d - \alpha \, \dot{r}_e + g \| \, \|\mathfrak{s}\| \\
            \end{aligned}
    \end{equation*}
    Using \cref{lemma:rot_frob} and \cref{remark:rot_frob}, $\dot{V}$ becomes
    \begin{equation*}
        \begin{aligned}
            \dot{V} \leq & - 2 \, \rho \left( 1 - 2 \sqrt{2} \|\vec{q}_e(t)\| \right) \|\mathfrak{s}\|^2 \\
                & + 4 \sqrt{2} \|\vec{q}_e(t)\| \|\ddot{r}_d - \alpha \, \dot{r}_e + g \| \, \|\mathfrak{s}\|.
        \end{aligned}
    \end{equation*}
    The above expression can be further simplified by noting that for $\|\mathfrak{s}\| \neq 0$ then $\dot{V} = 2 \, \|\mathfrak{s}\| \frac{d}{dt}\|\mathfrak{s}\|$, so
    \begin{equation*}
        \begin{aligned}
            \frac{d}{dt}\|\mathfrak{s}\| \leq & - \rho \left( 1 - 2 \sqrt{2} \|\vec{q}_e(t)\| \right) \|\mathfrak{s}\| \\
                & + 2 \sqrt{2} \|\vec{q}_e(t)\| \|\ddot{r}_d - \alpha \, \dot{r}_e + g \|.
        \end{aligned}
    \end{equation*}
    When $\|\mathfrak{s}\| = 0$, one can show that $D^+\|s\|$ is equal to the right hand side of the above differential inequality evaluated at $\|\mathfrak{s}\| =0$ as \cref{eq:s_dot} is continuously differentiable in $\mathfrak{s}$ and piecewise continuous in $t$. 
    If we let $\Delta_1(t) = 2 \sqrt{2} \| \vec{q}_e(t)\|$ and $\Delta_2(t) = \Delta_1(t) \| \ddot{r}_d - \alpha \, \dot{r}_e + g \|$, then we get
    \begin{equation}
        \label{eq:s_dot_norm}
        D^+ \| \mathfrak{s} \| \leq -\rho \, (1 - \Delta_1(t)) \|\mathfrak{s} \| + \Delta_2(t),
    \end{equation}    
    which is a perturbed linear differential inequality.    
    With the assumption that the closed-loop attitude dynamics are exponentially convergent, then $ \Delta_1(t) \rightarrow 0$ exponentially.
    Moreover, if $\ddot{r}_d$ and $\dot{r}_e$ are bounded, then $\Delta_2(t) \rightarrow 0$ exponentially. 
    Noting that $\|\mathfrak{s}\|$ and $D^+\|\mathfrak{s}\|$ satisfy the conditions of \cref{lemma:pert_rate,lemma:pert_input}, i.e., $D^+\|\mathfrak{s}\|$ is only discontinuous at 0, we can conclude $\|\mathfrak{s}(t)\| \rightarrow 0$ exponentially. 
\end{proof}

\begin{remark}
    \label{remark:u}
    The right-hand side of \cref{eq:Rd} can be viewed as the preferred control input for the translational dynamics of \cref{eq:dyn}.
    Stating this more explicitly, we have
    \begin{equation}
        \label{eq:pos_control}
        u = \ddot{r}_d - \alpha \, \dot{r}_e + g - \mathrm{K} \, \mathfrak{s}.
    \end{equation}
    If $R = R_d$, then the closed-loop position dynamics are
    \begin{equation*}
        \begin{aligned}
            \ddot{r} & = \frac{1}{m} \, R \, T \, \hat{e}_3 - g \\
            & = \ddot{r}_d - \alpha \, \dot{r}_e + g - \mathrm{K} \, \mathfrak{s} - g,
        \end{aligned}
    \end{equation*}
    which can be shown to be an exponentially stable second-order system for $r_e$, i.e., $ \ddot{r}_e + a_1 \dot{r}_e + a_2 r_e = 0$, given the appropriate selection of $\alpha,\, \lambda$, and $\mathrm{K}$.
\end{remark}

Building on the previous remark, \cref{thm:pos_control} can be strengthened by showing that $r_e$ must converge exponentially to the origin based on the choice of the position sliding variable $\mathfrak{s}$, as shown in the next corollary.

\begin{corollary}
    The closed-loop position dynamics are exponentially convergent with the attitude controller \cref{eq:att_control} and position controller \cref{eq:pos_control}. 
\end{corollary}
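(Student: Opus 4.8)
The plan is to combine the conclusion of \cref{thm:pos_control} — namely that $\|\mathfrak{s}(t)\| \to 0$ exponentially — with the fact that, by construction \cref{eq:s_pos}, $\mathfrak{s}$ is the output of a stable first-order linear filter acting on $r_e$. First I would discharge the hypothesis of \cref{thm:pos_control} using the corollary that follows \cref{thm:att_control}: with the attitude controller \cref{eq:att_control} the closed-loop attitude dynamics are exponentially convergent, so $\Delta_1(t)\to 0$ exponentially, and together with smoothness and boundedness of $r_d$ (and hence of $\ddot r_d$) this yields $\|\mathfrak{s}(t)\| \le c_{\mathfrak{s}}\, e^{-\beta_{\mathfrak{s}} t}$ for some $c_{\mathfrak{s}},\beta_{\mathfrak{s}}>0$.

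Next, rewrite \cref{eq:s_pos} as $\dot r_e = -\alpha\, r_e + \mathfrak{s}$, a Hurwitz linear system driven by the exponentially decaying signal $\mathfrak{s}$. By variation of constants, $r_e(t) = e^{-\alpha t} r_e(0) + \int_0^t e^{-\alpha(t-\sigma)}\mathfrak{s}(\sigma)\,d\sigma$, and bounding the convolution gives $\|r_e(t)\| \le c_r\, e^{-\beta_r t}$ for any rate $\beta_r$ strictly less than $\min\{\alpha,\beta_{\mathfrak{s}}\}$ (a factor of $t$ enters only in the borderline case $\alpha=\beta_{\mathfrak{s}}$, which is still dominated by any slightly slower exponential). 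This is precisely the perturbed-linear-system estimate of \cref{lemma:pert_input}; alternatively, both steps can be folded into a single application of \cref{lemma:pert_rate,lemma:pert_input} to the cascade $(\mathfrak{s},r_e)$. It then follows that $\dot r_e = \mathfrak{s} - \alpha\, r_e$ also decays exponentially.

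Finally I would conclude: the full position error state $(r_e,\dot r_e)$ converges to the origin exponentially, i.e., $(r,\dot r)\to(r_d,\dot r_d)$ exponentially; since $r_d(t)$ is smooth and bounded, it is itself a bounded solution of the closed-loop translational dynamics, so \cref{def:convergent} is met. I would also record the ``almost'' qualifier noted in \cref{remark:u}: \cref{eq:Rd} determines a valid pair $(R_d,T)$ only when the commanded input $u$ of \cref{eq:pos_control} is nonzero, so that a thrust direction is defined, which excludes a measure-zero set of configurations.

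The one non-routine point — the main obstacle — is the mild circularity in the hypotheses of \cref{thm:pos_control}: the perturbation $\Delta_2$ depends on $\dot r_e$, hence on $\mathfrak{s}$ and $r_e$ themselves, so strictly one must first establish forward boundedness of $(r_e,\dot r_e)$ on every finite interval and verify that the coefficient $\Delta_1$ appearing in \cref{eq:s_dot_norm} decays fast enough that the $\dot r_e$-dependent term cannot destabilize the cascade. \Cref{lemma:pert_rate} is presumably designed to handle exactly this coupling; everything else is the standard observation that a stable linear filter driven by an exponentially small signal has an exponentially small output.
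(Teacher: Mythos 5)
Your argument is correct and follows essentially the same route as the paper: the paper writes $\dot{r}_e = -\alpha\, r_e + \mathfrak{s}$, derives $\tfrac{d}{dt}\|r_e\| \leq -\alpha\|r_e\| + \|\mathfrak{s}\|$ via the Lyapunov candidate $V = \|r_e\|^2$, and invokes \cref{lemma:pert_input} exactly as you do (your variation-of-constants bound is just the explicit form of that lemma's estimate). Your additional remarks on the degenerate $u=0$ case and the $\Delta_2$ coupling are reasonable observations about \cref{thm:pos_control} but are not part of, or needed for, this corollary's proof.
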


\begin{proof}
    Consider the Lyapunov candidate $V = \|r_e\|^2$.
    Differentiating and noting $\dot{r}_e = - \alpha \, r_e + \mathfrak{s}$, then $ \dot{V} = -2 \, \alpha \|r_e\|^2 + 2 \, r_e^\top \mathfrak{s}$, which can be further simplified to obtain $\frac{d}{dt}\|r_e\| \leq - \alpha \, \|r_e\| + \|\mathfrak{s}\|$.
    Following \cref{lemma:pert_input}, since $\lim_{t \rightarrow \infty}\|\mathfrak{s}(t)\| = 0$ exponentially, then so too must $\|r_e\|$.
    Therefore, the closed-loop position dynamics are exponentially convergent using the attitude controller \cref{eq:att_control} and position controller \cref{eq:pos_control}.
\end{proof}

The assumptions for \cref{thm:pos_control} are that i) the closed-loop attitude dynamics are exponentially convergent and ii) there exists a desired rotation matrix $R_d$ and thrust $T$ such that \cref{eq:Rd} holds.
The first condition is met using the attitude controller \cref{eq:att_control}.
The second condition also holds since both $R_d$ and $T$ can be determined from \cref{eq:att_control}.
Specifically, since rotation matrices preserve length, then $T = m \|\ddot{r}_d - \alpha \, \dot{r}_e +g - \mathrm{K} \, \mathfrak{s} \|$.
Determining $R_d$ can be done via geometry.
Using \cref{eq:pos_control}, the problem of determining the desired orientation requires finding $R_d$ such $R_d \, T \, \hat{e}_3 = m u$.
In other words, one must find the rotation matrix $R_d$ that aligns the vectors $T \, \hat{e}_3$ and $m u$.
Geometrically, this can be achieved using the axis-angle rotation representation and Rodrigues' rotation formula.
After normalizing $T \, \hat{e}_3$ and $m u$, the angle of rotation is computed via the dot product, i.e., $\theta = \arccos \left(\hat{e}_3^\top u / \|u\|\right)$, and the axis of rotation via the cross product, i.e., $\hat{n} = \hat{e}_3 \times \hat{u} / \|\hat{e}_3 \times \hat{u}\|$.
Note that the expression for $\hat{n}$ must be carefully computed with $\hat{e}_3$ and $\hat{u}$ being anti-parallel because there are infinitely many rotations that align two anti-parallel vectors. 
When $\hat{e}_3$ and $\hat{u}$ are parallel, then we can set $\hat{n} = 0$.

It is important to note that the desired angular velocity and acceleration vectors $\omega_d$ and $\dot{\omega}_d$ needed by the attitude controller can be determined by differentiating $\cref{eq:Rd}$ after normalization.
Specifically, with $R_d \, \hat{e}_3 = \hat{u}$, then $R_d [\omega_d]_{\times} \hat{e}_3 = \dot{\hat{u}}$, which can simplified via expansion to get
\begin{equation}
    \label{eq:omega_d}
        [\omega_{d,y},\, -\omega_{d,x},\, 0]^\top = R_d^\top \dot{\hat{u}}.
\end{equation}
Differentiating again gives
\begin{equation}
    \label{eq:domega_d}
        [\dot{\omega}_{d,y},\, -\dot{\omega}_{d,x},\, 0]^\top = R_d^\top \ddot{\hat{u}} - R_d^\top [\omega_d]_\times \dot{\hat{u}}.
\end{equation}
Hence, given $u$ from \cref{eq:pos_control}, all the desired quantities needed by the attitude loop, e.g., $q_d$, $\omega_d$, $\dot{\omega}_d$, can be computed.

\subsection{Theoretical Guarantees \& Practical Pitfalls}
In this subsection, we examine the theoretical performance guarantees of exponentially convergent cascaded geometric control and highlight some of its practical limitations.

\textbf{Position Trajectory Tracking.} It is instructive to reiterate the trajectory tracking guarantees of the cascaded geometric control with \cref{eq:att_control,eq:pos_control}: given \emph{any feasible and sufficiently smooth} desired position trajectory $r_d(t)$, it is ensured that $ r(t) \rightarrow r_d(t)$ exponentially as $t \rightarrow \infty$, assuming that the model is known, there are no exogenous disturbances acting on the system, and feedforward is provided to the controller. 
The last condition is especially important because a lack of feedforward violates the conditions of \cref{thm:att_control,thm:pos_control}, and will lead to poor tracking performance.

\textbf{Effects of Attitude Tracking Error.} The closed-loop position sliding variable dynamics in \cref{eq:s_dot_norm} reveals that the attitude tracking error manifests in the outer-loop in two ways: a decrease in the convergence rate of $\|\mathfrak{s}\|$ and as an exogenous disturbance that drives $\|\mathfrak{s}\|$ away from zero. 
In other words, attitude tracking error will cause slower convergence to the desired position trajectory and, more importantly, act as a disturbance to the outer-loop, increasing position tracking error. 
These effects become more problematic if a non-exponentially convergent attitude controller is used in place of \cref{eq:att_control} because most stable attitude controllers can only guarantee bounded error for orientation trajectory tracking. 
Note that these effects also occur when the attitude loop is counteracting an exogenous torque disturbance, even with an exponentially convergent attitude tracking controller.

\textbf{Inherent Robustness.} Because the closed-loop translational dynamics are exponentially convergent, the closed-loop system exhibits desirable robustness to parametric and non-parametric uncertainties. 
For example, consider the scenario where an unknown but bounded exogenous disturbance $d \in \mathcal{D} \subset \mathbb{R}^3$ is added to the translational dynamics of \cref{eq:dyn}.
\Cref{eq:s_dot_norm} becomes $D^+\|\mathfrak{s}\| \leq -\rho(1-\Delta_1(t)) \| \mathfrak{s}\| + \Delta_2(t) + \|d\|$. 
Since $\|d\|$ is bounded by assumption, it can be shown that $\|s\|$ (and hence $\|r_e\|$) will exponentially converge to a bounded region around zero, which is considered a form of robustness for nonlinear systems. 
Using terminology from the model predictive control literature \cite{langson2004robust,mayne2011tube,lopez2019dynamic}, exponential convergence implies the existence of a control-invariant tube centered around the nominal/desired trajectory that can be used to robustly tighten constraints. 
This guarantees that the constraints are always satisfied despite the presence of uncertainties.
The analysis presented here shows that the trajectory tracking guarantees inherent to the presented cascaded geometric controller can be used to tighten safety constraints in a provably-safe manner, guaranteeing safe closed-loop trajectory design and execution. 

\textbf{Matched Model Uncertainties.} Model uncertainties and exogenous disturbances can be classified as either matched or unmatched \cite{krstic1995nonlinear,lopez2023dynamic}.
Formally, matched uncertainties lie within the span of the control input matrix, while unmatched uncertainties lie outside it.
Stated another way, a matched uncertainty can always be canceled directly through control, but an unmatched uncertainty cannot. 
Again, let a bounded model uncertainty/disturbance $d \in \mathcal{D} \subset \mathbb{R}^3$ be added to the translational dynamics of \cref{eq:dyn}.
If we let $R_d \, T \, \hat{e}_3 = m u$, the translational dynamics are 
\begin{equation*}
    \ddot{r} = R_d R_e R_d^\top u - g + d,
\end{equation*}
with $R_d R_e R_d^\top$ being the perturbed control input matrix (due to the presence of $R_e$).
In the nominal case, i.e., when $R = R_d \implies R_e = I$, the control input matrix is the identity, and the uncertainty $d$ is clearly within its span.
If $R_e \neq I$, the perturbed control input matrix still spans $\mathbb{R}^3$ since $\det(R_d R_e R_d^\top) = \det(R_d) \det(R_e) \det(R_d) = 1$ so $d$ is still matched. 
Hence, the uncertainty $d$ is \emph{always matched} with cascaded geometric control. 
As a result, all existing robust or adaptive control strategies developed for matched uncertainties can be simply appended to \cref{eq:pos_control} as if the system were a multi-dimensional double integrator. 
For example, if the model uncertainty came in the form of aerodynamic drag, i.e., $d = -c_d \|\dot{r}\|\dot{r}$ with unknown coefficient $c_d$, then a robust control law that guarantees a bounded tracking error is \cite{krstic1995nonlinear} 
\begin{equation}
    \label{eq:u_drag}
    u = \ddot{r}_d - \alpha \, \dot{r}_e + g - \mathrm{K} \, \mathfrak{s} + \hat{c}_d \|\dot{r}\| \dot{r} - \frac{1}{4} \|\dot{r}\|^4 \mathfrak{s} 
\end{equation}
where $\hat{c}_d$ is the nominal value of the unknown coefficient $c_d$.
Hence, any disturbance that acts on the rotational or translational dynamics of \cref{eq:dyn} is matched using cascaded geometric control. 
Although this is ideal from a robustness perspective, it does cause practical issues, as discussed next.

\textbf{Practical Limitations.} Three practical limitations are worth discussing. 
The first is reliance on vehicle/model parameters to achieve perfect trajectory tracking, especially in the attitude controller, which requires knowing the inertial tensor $J$. 
Despite this, the presented cascaded geometric controller will still exhibit good performance in practice with an imperfect model because of its inherent robustness (see previous discussion).
The second is a more critical issue often overlooked with this control strategy: \emph{the need for acceleration and jerk feedback}. 
Feedback on acceleration, i.e., feedback on $\ddot{r}_e = \ddot{r} - \ddot{r}_d$, is needed to compute $\omega_d$ for the attitude controller, as given by \cref{eq:omega_d} for $u$ defined by \cref{eq:pos_control}.
Similarly, feedback on jerk, i.e., feedback on ${r}^{(3)}_e = {r}^{(3)} - {r}^{(3)}_d$, is required to compute $\dot{\omega}_d$ for the attitude controller via \cref{eq:domega_d}.
In practice, computing these feedback terms is often done approximately via numerical differentiation of velocity tracking error because acceleration and jerk are noisy or not directly measured, which introduces errors that propagate through the closed-loop system, degrading tracking performance.
The third is related to the previous limitation, namely the need to differentiate feedforward or model inversion terms, which again requires access to higher-order derivative state information.
For example, the robust feedback controller \cref{eq:u_drag} that counteracts drag will produce terms related to $\ddot{r}$ and $r^{(3)}$ when differentiated, so these terms must be measured or estimated to implement the controller.
For more complex feedforward models, such as those learned from data, differentiation can be problematic because any model noise/error is amplified and propagated through the closed-loop system.
The importance of feedforward in nonlinear control is well-documented \cite[pg.~199]{slotine1991applied}, and has recently been a topic of discussion in relation to the performance of classical controllers compared to learned controllers (see \cite{kunapuli2025leveling} and the discussion therein).
However, the need to differentiate feedforward terms, along with the other limitation discussed above, all due to the nested architecture of cascaded geometric control, raises a broader question of whether new control strategies are needed that are better suited for highly-dynamic vehicles where higher-order effects, e.g., unsteady aerodynamics, come into play.

\section{Concluding Remarks}
A new stability proof for cascaded geometric control was presented for a class of aerial vehicles tasked with tracking a time-varying position trajectory.
The novelty of the approach stems from using a recently proposed quaternion-based attitude sliding controller that, when nested with a linear position controller, yielded an exponentially convergent closed-loop system.  
The use of sliding variables in our analysis not only streamlined that stability proof, but revealed how attitude tracking error affects position tracking. 
Several new performance aspects of the controller were also discussed, such as its inherent robustness and how uncertainties enter the systems, in addition to some of the practical pitfalls of the approach, e.g., feedback on higher-order derivatives of position.
Two broader implications of the results will be explored in future work. 
The first involves developing a provably safe trajectory planning approach like that in \cite{lopez2019dynamic} by combining the dynamic tracking error bounds derived in this work with recent advances in real-time sampling-based trajectory planning \cite{levy2024stitcher,levy2025stitcher}. 
The second focuses on developing a new control strategy that inherits the same performance guarantees as cascaded geometric control, but addresses the practical limitations discussed earlier.

\textbf{Acknowledgments.} The authors thank Jean-Jacque Slotine for his insightful comments early on in this work. 

\section{Appendix}
\begin{lemma}
\label{lemma:ode}
    Let $x$ be restricted to the interval $x \in [0,\,1)$.
    If $x$ satisfies the differential equation
    \begin{equation*}
        \dot{x} = - \sigma \, x \, \sqrt{1-x},
    \end{equation*}
    for $\sigma \in \mathbb{R}_{>0}$ then $x(t) \rightarrow 0$ exponentially as $t \rightarrow \infty$.   
\end{lemma}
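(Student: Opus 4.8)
The plan is to analyze this scalar autonomous ODE directly, first disposing of the equilibrium and then either integrating it in closed form or comparing it against a linear ODE. If $x(0)=0$, then since the right-hand side $f(x)=-\sigma x\sqrt{1-x}$ is $C^{1}$ (hence locally Lipschitz) on $[0,1)$, uniqueness of solutions forces $x(t)\equiv 0$ and there is nothing to prove; so I assume $x(0)\in(0,1)$. The first observation is that $f(x)\le 0$ on $[0,1)$ with equality only at $x=0$, so $x(\cdot)$ is non-increasing and therefore stays in $(0,x(0)]$ for all $t\ge 0$: it never leaves $[0,1)$ and, by uniqueness, never reaches the equilibrium $x=0$ in finite time. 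Consequently the trajectory is defined for all $t\ge 0$, stays strictly positive, and is bounded above by $x(0)<1$, so $1/(x(t)\sqrt{1-x(t)})$ is finite for every finite $t$, which is what makes the separation of variables below legitimate.

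For the quantitative conclusion I would separate variables, $\int \frac{dx}{x\sqrt{1-x}}=-\sigma t+\mathrm{const}$, and substitute $u=\sqrt{1-x}$ (so $dx=-2u\,du$); the integrand collapses to $\frac{2}{1-u^{2}}=\frac{1}{1-u}+\frac{1}{1+u}$, and integrating gives $\ln\frac{1+\sqrt{1-x}}{1-\sqrt{1-x}}=\sigma t+\mathrm{const}$. Solving for $x$ and fixing the constant from the initial data yields the closed-form solution
\begin{equation*}
    x(t)=\frac{4A\,e^{\sigma t}}{\left(A\,e^{\sigma t}+1\right)^{2}},\qquad A:=\frac{1+\sqrt{1-x(0)}}{1-\sqrt{1-x(0)}}\ge 1 .
\end{equation*}
Since $\left(A\,e^{\sigma t}+1\right)^{2}\ge\left(A\,e^{\sigma t}\right)^{2}$, this gives $x(t)\le\tfrac{4}{A}\,e^{-\sigma t}\le 4\,e^{-\sigma t}$, so $x(t)\to 0$ exponentially at rate $\sigma$; the sharper, initial-condition-dependent estimate $x(t)\le\tfrac{4}{A}e^{-\sigma t}$ is the tighter bound referenced in the proof of \cref{proposition:qe_local}, while the loose $4e^{-\sigma t}$ is what is used there with $\sigma=2\lambda$.

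If only the qualitative claim is wanted, a cleaner argument avoids the explicit solution: since $x$ is non-increasing, $\sqrt{1-x(t)}\ge\sqrt{1-x(0)}=:c>0$ for all $t\ge 0$, hence $\dot x\le-\sigma c\,x$, and a comparison (Gr\"onwall) argument gives $x(t)\le x(0)\,e^{-\sigma c t}$ with $c=\sqrt{1-x(0)}$. I would present the closed-form derivation, since it also delivers the explicit constants, and note the comparison bound as a quick alternative; the latter is enough on its own for the statement as written.

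I do not expect a genuine obstacle here. The only steps needing a little care are (i) the bookkeeping that keeps the solution inside $[0,1)$ and bounded away from $x=0$ on finite intervals, so that the reciprocal $1/(x\sqrt{1-x})$ and the change of variables are valid on all of $[0,\infty)$, and (ii) the substitution together with the partial-fraction integration and the back-substitution for the initial constant, which are routine but must be carried out without sign slips. Everything else is elementary.
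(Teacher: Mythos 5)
Your proposal is correct and follows essentially the same route as the paper: separation of variables with the substitution $u=\sqrt{1-x}$, leading to the explicit solution $x(t)=4c\,e^{-\sigma t}/(1+c\,e^{-\sigma t})^2$ with $c=1/A$ and the bound $x(t)\le 4c\,e^{-\sigma t}\le 4e^{-\sigma t}$. The extra well-posedness bookkeeping and the alternative Gr\"onwall comparison are fine additions but do not change the argument.
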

\begin{proof}
    The above differential equation is separable.
    Using the substitution $ y = \sqrt{1-x}$, one gets
    \begin{equation*}
        \int_{y_0}^{y} \frac{dz}{z^2 - 1} = - \sigma \, t,
    \end{equation*}
    which has the solution
    \begin{equation*}
        y(t) = \frac{1-c\,e^{-\sigma t}}{1 + c \, e^{-\sigma t}} ~~ \implies ~~ x(t) = \frac{4\, c \, e^{-\sigma t}}{(1+c\,e^{-\sigma t})^2} \leq  {4\, c \, e^{-\sigma t}}
    \end{equation*}
    for $c = (1-\sqrt{1-x_0}) / (1+\sqrt{1-x_0})$. 
    It is clear that $x$ tends to zero exponentially at rate $\sigma$, as desired.
\end{proof}

\begin{lemma}
    \label{lemma:frob}
    Let $A \in \mathbb{R}^{m\times n}$ and $x \in \mathbb{R}^n$. 
    Then $\|A x\| \leq \|A\|_F \|x\|$ where $\|\cdot\|_F$ is the Frobenius norm of $A$.
\end{lemma}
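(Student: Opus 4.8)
The plan is to prove the inequality by reducing it to a componentwise application of Cauchy--Schwarz, leveraging the row-decomposition of the Frobenius norm already recorded in the Notation paragraph, namely $\|A\|_F^2 = \sum_{i=1}^m \|a_i\|^2$ where $a_i^\top \in \mathbb{R}^n$ is the $i$-th row of $A$. First I would write out the $i$-th entry of the vector $Ax$ as the inner product $(Ax)_i = a_i^\top x$, so that $\|Ax\|^2 = \sum_{i=1}^m (a_i^\top x)^2$ directly from the definition of the Euclidean norm.

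Next I would bound each term: by the Cauchy--Schwarz inequality, $(a_i^\top x)^2 \le \|a_i\|^2\,\|x\|^2$. Summing over $i=1,\dots,m$ and pulling the common factor $\|x\|^2$ out of the sum gives
\begin{equation*}
    \|Ax\|^2 = \sum_{i=1}^m (a_i^\top x)^2 \le \|x\|^2 \sum_{i=1}^m \|a_i\|^2 = \|x\|^2 \, \|A\|_F^2 .
\end{equation*}
Taking nonnegative square roots of both sides yields $\|Ax\| \le \|A\|_F\,\|x\|$, which is the claim. The case $x = 0$ is trivial since both sides vanish.

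There is essentially no obstacle here; the result is classical (it says the spectral norm is dominated by the Frobenius norm) and the only thing to be careful about is bookkeeping the indices so that the definition of $\|A\|_F$ from the Notation section is applied verbatim. An alternative route would be to invoke a singular value decomposition of $A$ and observe that the largest singular value is at most $(\sum_i \sigma_i^2)^{1/2} = \|A\|_F$, but the row-wise Cauchy--Schwarz argument is shorter and self-contained, so that is the one I would present.
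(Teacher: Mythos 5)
Your proposal is correct and follows essentially the same route as the paper's proof: expand $Ax$ row by row, apply Cauchy--Schwarz to each entry $(a_i^\top x)^2 \le \|a_i\|^2\|x\|^2$, sum, and identify $\sum_i \|a_i\|^2$ with $\|A\|_F^2$. No differences worth noting.
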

\begin{proof}
    Let the $i$-th row vector of $A$ be $a_i^\top \in \mathbb{R}^n$.
    The product $Ax$ can then be expressed as $Ax = [a_1^\top x, \dots,  a_m^\top x]^\top$.
    Taking the norm gives $\|A x\|^2 = (a_1^\top x)^2 + \dots + (a_m^\top x)^2$, and by Cauchy-Schwarz, $\|a_i^\top x \|^2 \leq \|a_i\|^2 \|x\|^2$, so $\| A x \|^2 \leq \|a_1\|^2\|x\|^2 + \dots + \|a_m\|^2 \|x\|^2 = \sum_i^m \|a_i\|^2 \|x\|^2$.
    Since $a_i$ is the $i$-th row vector of $A$, then $\|a_i\|^2 = \sum_j^n a_{ij}^2$.
    Hence, $\|A x\|^2 \leq \sum_i^m \sum_j^n a_{ij}^2 \|x\|^2 = \|A\|_F^2 \|x\|^2$.
    Taking the square root gives the desired result.
\end{proof}

\begin{lemma}
    \label{lemma:pert_rate}
    Suppose $v$ is continuous, non-negative uniformly, and satisfies the differential inequality
    \begin{equation}
        \label{eq:ode1}
        D^+{v} \leq (- \lambda + \Delta(t) ) \, v
    \end{equation}
    where $\lambda \in \mathbb{R}_{>0}$ and $0 \leq \Delta(t) < \infty$ uniformly.
    If $\Delta(t) \rightarrow 0$ as $t \rightarrow \infty$ and $D^+v$ is discontinuous on a set of measure zero, then $v(t) \rightarrow 0$ exponentially as $t \rightarrow \infty$.
\end{lemma}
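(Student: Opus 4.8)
The plan is to peel off the nominal exponential rate, reduce \cref{eq:ode1} to a Gronwall-type inequality for an auxiliary function, integrate that inequality by a comparison argument, and then use $\Delta(t)\to 0$ to show the residual perturbation only costs an arbitrarily small slice of the decay rate.

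First I would introduce $w(t) = e^{\lambda t} v(t)$. Since $e^{\lambda t}$ is $C^1$ and strictly positive and $v$ is continuous and non-negative, the product rule for the upper right Dini derivative gives $D^+ w(t) = e^{\lambda t} D^+ v(t) + \lambda e^{\lambda t} v(t)$, and substituting \cref{eq:ode1} this collapses to $D^+ w(t) \le \Delta(t)\, w(t)$, with $w$ continuous and non-negative and $\Delta$ non-negative and uniformly bounded.

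Next I would integrate this Dini inequality to obtain $w(t) \le w(0)\exp\!\big(\int_0^t \Delta(s)\,ds\big)$. One clean route: for $w(0)>0$, the quotient $\theta = w/z$ with $z(t) = w(0)\exp\!\big(\int_0^t \Delta(s)\,ds\big)$ satisfies $D^+\theta = (D^+ w - \Delta\, w)/z \le 0$, so $\theta$ is non-increasing (a continuous function with everywhere non-positive upper right Dini derivative is non-increasing, by a first-crossing argument applied to $\theta - \epsilon t$ and letting $\epsilon\downarrow 0$), which yields $\theta(t)\le\theta(0)=1$, i.e. $w(t)\le z(t)$; the degenerate case $w(0)=0$ follows by running the same bound on $w+\delta$ and sending $\delta\downarrow 0$, where $\Delta\ge 0$ is used. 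Equivalently, one may simply cite the standard comparison lemma for differential inequalities with Dini derivatives. Finally, fixing any $\epsilon\in(0,\lambda)$ and choosing $T_\epsilon$ with $\Delta(s)<\epsilon$ for $s\ge T_\epsilon$, we bound $\int_0^t \Delta(s)\,ds \le M_\epsilon + \epsilon t$ with $M_\epsilon := \int_0^{T_\epsilon}\Delta(s)\,ds < \infty$, hence $v(t) = e^{-\lambda t} w(t) \le \big(v(0)\, e^{M_\epsilon}\big)\, e^{-(\lambda-\epsilon)t}$. Since $\lambda-\epsilon>0$, this is precisely exponential convergence of $v$ to $0$ (in fact at any rate below $\lambda$), as claimed.

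The main obstacle is the integration step: passing from the pointwise one-sided inequality $D^+ w \le \Delta(t)\, w$ to the genuine estimate $w(t)\le w(0)\exp(\int_0^t\Delta)$. Pathological continuous functions (e.g., singular functions with a.e.\ vanishing derivative) show such a passage can fail when the Dini derivative is badly behaved, which is exactly where the hypothesis that $D^+v$ is discontinuous only on a set of measure zero enters — it guarantees the comparison function $z$ and the integral estimate are well defined. Everything preceding this step is bookkeeping and everything after it is a routine $\epsilon$-argument.
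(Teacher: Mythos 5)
Your proof is correct, and its core mechanism is the same as the paper's: use $\Delta(t)\to 0$ to get $\Delta(t)<\epsilon$ for $t\ge T_\epsilon$ and conclude decay at the slightly degraded rate $\lambda-\epsilon$. Where you differ is in how the Dini-derivative inequality is integrated: the paper simply invokes a comparison result (citing the Hagood--Thomson machinery for recovering a function from its Dini derivative) to pass from $D^+v\le-\gamma v$ to $v(t)\le v(0)e^{-\gamma t}$, whereas you make this step self-contained via the substitution $w=e^{\lambda t}v$, the quotient $\theta=w/z$, and the first-crossing monotonicity argument. Your route buys two things: it carries the explicit Gronwall factor $\exp(\int_0^t\Delta)$, which handles the transient on $[0,T_\epsilon]$ that the paper's proof silently glosses over (its inequality $D^+v\le-\gamma v$ only holds for $t>\delta$, so one still needs $v$ bounded on $[0,\delta]$), and it directly produces the overshoot constant that the paper only states in the remark following the lemma. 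Two small caveats. First, writing $\int_0^{T_\epsilon}\Delta(s)\,ds$ presumes $\Delta$ is at least locally integrable, which the lemma's hypotheses do not literally grant (the paper's constant-coefficient route avoids this); in the application $\Delta$ is continuous, so nothing is lost, but you could sidestep it entirely by running your comparison on $[T_\epsilon,t]$ starting from $w(T_\epsilon)$. Second, your closing remark misattributes the role of the measure-zero hypothesis: your own argument never needs it, because $D^+w\le\Delta w$ holds at \emph{every} $t$ (it is inherited pointwise from \cref{eq:ode1}), and a continuous function with everywhere non-positive upper right Dini derivative is non-increasing with no further regularity; the Cantor-function pathology only arises when the inequality holds merely almost everywhere. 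That hypothesis is present in the paper because the cited comparison lemma requires it, not because your integration step would otherwise fail.
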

\begin{proof}
    Since $\Delta(t) \rightarrow 0$, for $0 < \epsilon < \lambda$ there exists a $\delta > 0$ such that for $t > \delta$ then $\Delta(t) < \epsilon < \lambda$.
    Hence, there exists $\gamma = \lambda - \epsilon \in \mathbb{R}_{>0}$, so $D^+{v} \leq - \gamma v$.    
    Given the conditions on $v$ and $D^+v$, the solution to \cref{eq:ode1} is $v(t) \leq v(0) e^{-\gamma t}$ \cite{hagood2006recovering}. 
    Therefore, $v(t) \rightarrow 0$ exponentially, as desired.
\end{proof}

\begin{remark}
    Note that if $\Delta(t) \rightarrow 0$ exponentially as $t \rightarrow \infty$, i.e., $\Delta(t) = C e^{-p t}$, then one can show $v(t) \leq v(0) \, e^{-\lambda t + C (1 - e^{-pt})/p} \leq R \, v(0) \, e^{-\lambda t}$
    where $R = e^{C/p}$ is referred to as overshoot.
\end{remark}

\begin{lemma}
    \label{lemma:pert_input}
    Suppose $v$ is continuous, non-negative uniformly, and satisfies the differential inequality
    \begin{equation}
        \label{eq:ode2}
        D^+{v} \leq - \lambda  \, v + \Delta(t)
    \end{equation}
    where $\lambda \in \mathbb{R}_{>0}$, $0 \leq \Delta(t) < \infty$ uniformly.
    If $\Delta(t) \rightarrow 0$ as $t \rightarrow \infty$ and $D^+v$ is discontinuous on a set of measure zero, then $v(t) \rightarrow 0$ as $t \rightarrow \infty$.
\end{lemma}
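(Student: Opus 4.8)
The plan is to mirror the comparison-principle argument used in the proof of \cref{lemma:pert_rate}, but now carry the additive perturbation through explicitly. First I would compare $v$ against the solution $w$ of the scalar linear ODE $\dot{w} = -\lambda w + \Delta(t)$ with $w(0) = v(0)$. Since $v$ is continuous and non-negative, $D^+v$ is discontinuous only on a set of measure zero, and $D^+v \leq -\lambda v + \Delta(t)$, the same Dini-derivative comparison result invoked for \cref{lemma:pert_rate} (see \cite{hagood2006recovering}) gives $v(t) \leq w(t)$ for all $t \geq 0$. The variation-of-constants formula then yields
\begin{equation*}
    0 \leq v(t) \leq w(t) = e^{-\lambda t}\, v(0) + \int_0^t e^{-\lambda (t-\tau)}\, \Delta(\tau)\, d\tau.
\end{equation*}

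Next I would show the right-hand side vanishes as $t \to \infty$. The term $e^{-\lambda t} v(0)$ clearly tends to zero. For the convolution term, fix $\epsilon > 0$. Because $\Delta(t) \to 0$, there is a $T > 0$ with $\Delta(\tau) < \epsilon$ for all $\tau \geq T$; and because $\Delta$ is uniformly bounded, say $\Delta(\tau) \leq M < \infty$, I would split, for $t > T$,
\begin{equation*}
    \int_0^t e^{-\lambda(t-\tau)}\Delta(\tau)\,d\tau = \int_0^T e^{-\lambda(t-\tau)}\Delta(\tau)\,d\tau + \int_T^t e^{-\lambda(t-\tau)}\Delta(\tau)\,d\tau.
\end{equation*}
The first piece is bounded by $M e^{-\lambda t}(e^{\lambda T}-1)/\lambda$, which $\to 0$ as $t \to \infty$ for fixed $T$; the second piece is bounded by $\epsilon \int_T^t e^{-\lambda(t-\tau)}\,d\tau \leq \epsilon/\lambda$. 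Hence $\limsup_{t\to\infty} v(t) \leq \epsilon/\lambda$, and since $\epsilon$ is arbitrary, $v(t) \to 0$.

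I expect the main obstacle to be the first step: rigorously justifying the comparison $v \leq w$ from only an upper-Dini-derivative inequality for a merely continuous (possibly nondifferentiable) $v$ whose derivative may jump on a measure-zero set. This is exactly the technicality handled by the recovery/comparison result already cited for \cref{lemma:pert_rate}, so I would lean on that rather than reprove it; the alternative self-contained route — a barrier argument showing that once $\Delta(t) < \epsilon$ the function $v$ is strictly decreasing whenever it exceeds $\epsilon/\lambda + \eta$, hence is eventually trapped near the level $\epsilon/\lambda$ — is more elementary but fiddlier to make airtight. The remaining convolution estimate is the standard ``exponential kernel against a bounded, decaying input decays'' fact and is routine. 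Note that, unlike \cref{lemma:pert_rate}, no \emph{rate} of convergence is claimed here, since $\Delta(t)$ is only assumed to tend to zero; were $\Delta(t)$ to decay exponentially, the split above would instead yield an exponential bound on $v$, consistent with the remark following \cref{lemma:pert_rate}.
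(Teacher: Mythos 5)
Your proposal is correct and follows essentially the same route as the paper: both invoke the Dini-derivative comparison result of \cite{hagood2006recovering} to pass from the differential inequality to an integrated bound, and then show that an exponentially decaying kernel applied to a bounded, vanishing input tends to zero. The only cosmetic difference is that the paper bounds $\Delta$ by its tail supremum $D = \sup_{t \geq \tau_1}\Delta(t)$ and lets $\tau_1 \to \infty$, whereas you keep the full convolution and perform the equivalent $\epsilon$--$T$ split of the integral.
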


\begin{proof}
    Suppose $D = \sup_{t \geq \tau_1} \Delta(t) < \infty$ for $\tau_1 \geq 0$.
    Given the conditions on $v$ and $D^+v$, the solution to \cref{eq:ode2} is \cite{hagood2006recovering}
    \begin{equation*}
        v(t) \leq v(0) \, e^{-\lambda (t-\tau_1)} + \frac{D}{\lambda} \left(1 - e^{\lambda (t-\tau_1)}\right).
    \end{equation*}
    Since $\Delta(t) \rightarrow 0$ then $D \rightarrow 0$ as $\tau_1 \rightarrow \infty$.
    Therefore, $v(t) \rightarrow 0$ as $t \rightarrow 0$, as desired.
\end{proof}

\begin{corollary}
    If $\Delta(t) \rightarrow 0$ exponentially as $t \rightarrow \infty$, then $v(t) \rightarrow 0$ exponentially as $t \rightarrow \infty$.
\end{corollary}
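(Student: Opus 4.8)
The plan is to run the proof of \cref{lemma:pert_input} again but keep track of the rate quantitatively. Exponential convergence of $\Delta$ means there exist $C, p \in \mathbb{R}_{>0}$ with $0 \le \Delta(t) \le C e^{-pt}$ for all $t \ge 0$. Since $v$ is continuous, uniformly non-negative, and $D^+v$ is discontinuous only on a set of measure zero, the same Dini-derivative comparison principle used in \cref{lemma:pert_input} (see \cite{hagood2006recovering}) yields the variation-of-constants bound
\begin{equation*}
    v(t) \le v(0)\, e^{-\lambda t} + \int_0^t e^{-\lambda (t-\sigma)} \Delta(\sigma)\, d\sigma \le v(0)\, e^{-\lambda t} + C\, e^{-\lambda t} \int_0^t e^{(\lambda - p)\sigma}\, d\sigma,
\end{equation*}
so everything reduces to bounding the scalar integral $\int_0^t e^{(\lambda-p)\sigma}\, d\sigma$.

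Next I would split into cases. If $\lambda \ne p$, then $\int_0^t e^{(\lambda-p)\sigma}\, d\sigma = \bigl(e^{(\lambda-p)t}-1\bigr)/(\lambda-p)$, so the perturbation term equals $\tfrac{C}{\lambda-p}\bigl(e^{-pt} - e^{-\lambda t}\bigr)$, which is bounded by a constant multiple of $e^{-\min(\lambda,p)\,t}$; together with the homogeneous term this gives $v(t) \le R\, e^{-\min(\lambda,p)\,t}$ for a suitable $R>0$. If $\lambda = p$ (the resonant case), the integral equals $t$ and the bound degrades to $v(t) \le v(0)\, e^{-\lambda t} + C\, t\, e^{-\lambda t}$. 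Here I would absorb the polynomial factor into a marginally slower exponential: for any $\mu \in (0,\lambda)$ one has $\sup_{t \ge 0} t\, e^{-(\lambda-\mu)t} = 1/\bigl(e(\lambda-\mu)\bigr) =: M_\mu < \infty$, hence $t\, e^{-\lambda t} \le M_\mu\, e^{-\mu t}$ and therefore $v(t) \le \bigl(v(0) + C M_\mu\bigr) e^{-\mu t}$.

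In all cases $v(t)$ is dominated by a decaying exponential, which is exactly the claim. The only subtlety is the resonant case $\lambda = p$, where one cannot retain the natural rate $\lambda$ and must settle for any rate strictly below it; everything else is the routine evaluation of a convolution integral, and the comparison bound itself is already available from \cref{lemma:pert_input}. One could also sidestep the case split entirely by noting $\Delta(t) \le C e^{-pt} \le C e^{-\mu t}$ for any $\mu \le p$ and choosing $\mu < \lambda$ so that the non-resonant formula always applies, at the cost of a slightly less sharp rate and constant.
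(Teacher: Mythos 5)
Your proposal is correct and follows essentially the same route as the paper: apply the comparison/variation-of-constants bound from \cref{lemma:pert_input} with $\Delta(t) = C e^{-pt}$, evaluate the convolution integral to get $v(t) \leq v(0) e^{-\lambda t} + \tfrac{C}{\lambda-p}\left(e^{-pt} - e^{-\lambda t}\right)$, and conclude decay at rate $\min\{\lambda, p\}$. The one substantive difference is that the paper's proof simply \emph{assumes} $p \neq \lambda$, even though the corollary statement does not exclude the resonant case; your treatment of $p = \lambda$ (absorbing the factor $t$ into a marginally slower exponential, or equivalently weakening the bound on $\Delta$ to $C e^{-\mu t}$ with $\mu < \lambda$) closes that small gap and makes the argument complete as stated.
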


\begin{proof}
    Let $\Delta(t) = C e^{-pt}$ where $p, C \in \mathbb{R}_{>0}$ and $p \neq \lambda$.
    Given the conditions on $v$ and $D^+v$, the solution to the differential inequality in \cref{lemma:pert_input} is \cite{hagood2006recovering} 
    \begin{equation*}
        v(t) \leq v(0) e^{-\lambda t} + \frac{C}{\lambda - p} \left( e^{-p t} - e^{-\lambda t} \right),
    \end{equation*}
    so $v(t) \rightarrow 0$ exponentially at rate $\min\{\lambda,\,p\}$ as $t \rightarrow 0$.
\end{proof}

\bibliographystyle{ieeetr}
\bibliography{ref}

\end{document}